\begin{document}
\newcommand{\field}[1]{\mathbb{#1}} 
\newcommand{\ind}{\mathbbm{1}}
\newcommand{\C}{\mathbb{C}}
\newcommand{\D}{\,\mathscr{D}}
\newcommand{\E}{\,\mathrm{E}}
\newcommand{\Prob}{\,\mathrm{P}}
\newcommand{\F}{\,\mathscr{F}}
\newcommand{\I}{\,\mathrm{i}}
\newcommand{\N}{\field{N}}
\newcommand{\T}{\,\mathrm{T}}
\newcommand{\Ls}{\,\mathscr{L}}
\newcommand{\ve}{\,\mathrm{vec}}
\newcommand{\var}{\,\mathrm{var}}
\newcommand{\cov}{\,\mathrm{Cov}}
\newcommand{\vech}{\,\mathrm{vech}}
\newcommand*\dif{\mathop{}\!\mathrm{d}}
\newcommand{\difs}{\mathrm{d}}
\newcommand\mi{\mathrm{i}}
\newcommand\me{\mathrm{e}}
\newcommand{\R}{\field{R}}

\newcommand{\es}{\hat{f}_n}
\newcommand{\ess}{\hat{f}_{\flr{ns}}}
\newcommand{\p}[1]{\frac{\partial}{\partial#1}}
\newcommand{\pp}[1]{\frac{\partial^2}{\partial#1\partial#1^{\top}}}
\newcommand{\para}{\bm{\theta}}
\providecommand{\bv}{\mathbb{V}}
\providecommand{\bu}{\mathbb{U}}
\providecommand{\bt}{\mathbb{T}}
\newcommand{\flr}[1]{\lfloor#1\rfloor}
\newcommand{\ba}{B_m}
\newcommand{\bxi}{\bar{\xi}_n}
\newcommand{\sgn}{{\rm sgn \,}}
\newcommand{\rint}{\int^{\infty}_{-\infty}}

\newcommand{\dr}{\mathrm{d}}
\newcommand{\red}[1]{\textcolor{red}{#1}}

\newcommand{\skakko}[1]{\left(#1\right)}
\newcommand{\mkakko}[1]{\left\{#1\right\}}
\newcommand{\lkakko}[1]{\left[#1\right]}

\newcommand{\Z}{\field{Z}}
\newcommand{\Zo}{\field{Z}_0}

\newcommand{\abs}[1]{\lvert#1\rvert}
\newcommand{\ct}[1]{\langle#1\rangle}
\newcommand{\inp}[2]{\langle#1,#2\rangle}
\newcommand{\norm}[1]{\lVert#1 \rVert}
\newcommand{\Bnorm}[1]{\Bigl\lVert#1\Bigr  \rVert}
\newcommand{\Babs}[1]{\Bigl \lvert#1\Bigr \rvert} 
\newcommand{\ep}{\epsilon} 
\newcommand{\sumn}[1][i]{\sum_{#1 = 1}^T}
\newcommand{\tsum}[2][i]{\sum_{#1 = -#2}^{#2}}
\providecommand{\abs}[1]{\lvert#1\rvert}
\providecommand{\Babs}[1]{\Bigl \lvert#1\Bigr \rvert} 

\newcommand{\uint}{\int^{1}_{0}}
\newcommand{\freqint}{\int^{\pi}_{-\pi}}
\newcommand{\li}[1]{\mathfrak{L}(S_{#1})}

\newcommand{\cum}{{\rm cum}}

\newcommand{\xt}{\bm{X}_{t, T}}
\newcommand{\yt}{\bm{Y}_{t, T}}
\newcommand{\zt}{\bm{Z}_{t, T}}
\newcommand{\gcu}{{\rm GC}^{2 \to 1}(u)}

\newcommand{\btheta}{\bm{\bm{\theta}}}
\newcommand{\bbeta}{\bm{\eta}}
\newcommand{\bzeta}{\bm{\zeta}}
\newcommand{\bzero}{\bm{0}}
\newcommand{\bI}{\bm{I}}
\newcommand{\bd}{\bm{d}}
\newcommand{\bx}[1]{\bm{X}_{#1, T}}
\newcommand{\be}{\bm{\ep}}
\newcommand{\bp}{\bm{\phi}}

\newcommand{\act}{A_T^{\circ}}
\newcommand{\ac}{A^{\circ}}

\newcommand{\dlim}{\xrightarrow{d}}
\newcommand{\plim}{\rightarrow_{P}}

\newcommand{\ls}{\mathcal{S}}
\newcommand{\cs}{\mathcal{C}}

\providecommand{\ttr}[1]{\textcolor{red}{ #1}}
\providecommand{\ttb}[1]{\textcolor{blue}{ #1}}
\providecommand{\ttg}[1]{\textcolor{green}{ #1}}
\providecommand{\tty}[1]{\textcolor{yellow}{ #1}}
\providecommand{\tto}[1]{\textcolor{orange}{ #1}}
\providecommand{\ttp}[1]{\textcolor{purple}{ #1}}

\newcommand{\sign}{\mathop{\rm sign}}
\newcommand{\conv}{\mathop{\rm conv}}
\newcommand{\argmax}{\mathop{\rm arg~max}\limits}
\newcommand{\argmin}{\mathop{\rm arg~min}\limits}
\newcommand{\argsup}{\mathop{\rm arg~sup}\limits}
\newcommand{\arginf}{\mathop{\rm arg~inf}\limits}
\newcommand{\diag}{\mathop{\rm diag}}
\newcommand{\minimize}{\mathop{\rm minimize}\limits}
\newcommand{\maximize}{\mathop{\rm maximize}\limits}
\newcommand{\tr}{\mathop{\rm tr}}
\newcommand{\Cum}{\mathop{\rm Cum}\nolimits}
\newcommand{\Var}{\mathop{\rm Var}\nolimits}
\newcommand{\Cov}{\mathop{\rm Cov}\nolimits}

\numberwithin{equation}{section}
\theoremstyle{plain}
\newtheorem{thm}{Theorem}[section]

\newtheorem{lem}[thm]{Lemma}
\newtheorem{prop}[thm]{Proposition}
\theoremstyle{definition}
\newtheorem{defi}[thm]{Definition}
\newtheorem{assumption}[thm]{Assumption}
\newtheorem{cor}[thm]{Corollary}
\newtheorem{rem}[thm]{Remark}
\newtheorem{eg}[thm]{Example}

\title{Identification and estimation 
of structural vector autoregressive models 
via LU decomposition}
\author[1]{Masato Shimokawa}
\author[2]{Kou Fujimori}
\affil[1, 2]{Faculty of Economics and Law,
Shinshu University.}
\date{}
\maketitle
\begin{abstract}
Structural vector autoregressive (SVAR) 
models are 
widely used to analyze the simultaneous relationships 
between multiple time-dependent data.
Various statistical inference methods 
have been studied to 
overcome the identification problems of SVAR models.
However, most of these methods impose 
strong assumptions for innovation processes
such as the uncorrelation of components.
In this study, we relax the assumptions for 
innovation processes and propose an identification method for SVAR models
under the zero-restrictions 
on the coefficient matrices,  
which correspond to sufficient conditions for 
LU decomposition
of the coefficient matrices of the reduced form 
of the SVAR models.
Moreover, we establish asymptotically normal estimators for the coefficient matrices and 
impulse responses, which enable us to 
construct test statistics for the 
simultaneous relationships of  
time-dependent data.
The finite-sample performance of the proposed method
is elucidated by numerical simulations.
We also present an example of an empirical study  
that analyzes the impact of policy rates on unemployment and prices.
\end{abstract}
\section{Introduction}\label{sec:introduction}
Structural vector autoregressive (SVAR) models
studied by \cite{Sims1980}, \cite{Bernanke1986}, and \cite{Blanchard1988} are widely used to 
analyze time-dependent macroeconomic data.
A basic SVAR model is expressed as follows:
\begin{equation}\label{eq:intro SVAR}
\bm{Y}_t = \bm{\mu} + \bm{A}_0 \bm{Y}_t + \sum_{s=1}^{p} \bm{A}_{s} \bm{Y}_{t-s} + \bm{v}_t,\quad
t \in \mathbb{Z},
\end{equation}
where $\bm{\mu} \in \mathbb{R}^k$ is an intercept, $\bm{A}_s, s=0,1,\ldots,p$ are $k \times k$
coefficient matrices, and $\{\bm{v}_t\}_{t \in \mathbb{Z}}$ is an innovation process.
The term $\bm{A}_0 \bm{Y}_t$ in the right-hand side represents the 
simultaneous relationships between 
the components of $\bm{Y}_t$.
When $\bm{I}_k - \bm{A}_0$ is non-singular, 
we have the following ordinal vector autoregressive 
(VAR) representation:
\begin{equation}\label{eq:intro reduced form}
\bm{Y}_t = \bm{\eta} + \sum_{s=1}^p \bm{B}_s \bm{Y}_{t-s} + 
\bm{e}_t,\quad t \in \mathbb{Z},
\end{equation}
where 
\[
\bm{\eta} = \bm{Q} \bm{\mu},\quad
 \bm{B}_s = \bm{Q} \bm{A}_s,\quad
s=1,\ldots,p,
\]
with $\bm{Q} = (\bm{I}_k - \bm{A}_0)^{-1}$, and 
$\bm{e}_t = \bm{Q} \bm{v}_t$.
We call \eqref{eq:intro reduced form} 
the reduced form of \eqref{eq:intro SVAR}.
Under some assumptions such as the stationarity 
of the process $\{\bm{Y}_t\}_{t \in \mathbb{Z}}$,
we can construct an asymptotically normal 
estimator for $\bm{B}_s, s=1,\ldots,p$ using methods 
such as 
ordinary least squares (OLS).
Meanwhile, some restrictions are required 
to recover 
$\bm{A}_s, s=0,\ldots,p$ 
from observable structures.
Such identification problems 
have been discussed by several researchers.
Typically, we consider zero restrictions, 
e.g., some specific components of $\bm{A}_s, s=0,1,\ldots,p$ are fixed to zero.
Under some additional assumptions, a general identification method based on zero 
restrictions 
was 
proposed by
\cite{Rubio-Ramirez2010}.
\cite{Canova2002} and \cite{Uhlig2005}
proposed identification methods for SVAR models
under the 
sign restriction of the coefficient matrices, i.e., 
the signs of some specific impulse 
responses 
are known.
\cite{Hyvarinen2010} and \cite{Lanne2017}
relaxed the restrictions on the 
coefficient matrices 
and developed other methods for non-Gaussian processes to allow flexible identification of  structures.

To correctly interpret the 
analysis of SVAR models,
it is necessary to impose appropriate restrictions
based on prior knowledge and data background.
Regarding this point, previous studies 
have focused on 
the restrictions 
on the coefficient matrices.
Meanwhile, it is often assumed that 
$\Var[\bm{v}_t]$ is a diagonal matrix. 
However, the components of $\bm{v}_t$ might be correlated if $\bm{v}_t$ includes some unobservable exogenous variables and does not correspond to the unique shocks of $\bm{Y}_t$.
Because an invalid assumption may 
lead to misunderstandings in causal interpretations, 
it is imperative to 
consider other identification methods under 
less restrictive conditions on the innovation process.

In this study, under mild conditions on the innovation process, where 
$\Var[\bm{v}_t]$ is allowed to be a 
non-diagonal matrix, we propose an identification method based on zero restrictions on the coefficient matrices 
and LU decomposition for 
a sub-matrix of $\bm{B} = (\bm{\eta}, \bm{B}_1,\ldots,\bm{B}_p)$.

Moreover, we establish 
asymptotically normal estimators for  
the coefficient matrices $\bm{A}_s, s=0,\ldots,p$
and impulse responses, enabling us 
to construct test statistics for 
the hypothesis testing
whose null hypothesis 
is that $\mathcal{H}_0: \bm{A}_0 = \bm{O}$, 
which can be used to 
verify whether the simultaneous relationships 
should be considered for the data $\{\bm{Y}_t\}_{t \in \mathbb{Z}}$.

The remainder of this article is organized as follows.
In Section \ref{sec:SVAR model setups}, 
we describe the model setup of SVAR models
and present a motivational example.
In Section \ref{sec:estimations}, 
we propose the identification and estimation 
methods for $\bm{A}_s, s=0,1,\ldots,p$
via LU decomposition 
under appropriate restrictions.
We also consider 
the impulse response estimation and  
hypothesis testing 
for $\bm{A}_0$ in this section.
The numerical simulations used to verify the 
asymptotic behavior of the estimators and test statistics are presented in  
Section \ref{sec:numerical}.
We further apply the proposed methods to 
analyze the impact of policy rates on employment 
and prices in this section.
The proofs of the main theoretical results are presented in Section \ref{sec:proofs}.
We discuss the causal interpretations
of the statistical inference for SVAR models
in the Appendix.
\section{SVAR models}\label{sec:SVAR model setups}
\subsection{Model setup and a motivational example}
We consider the following model:
\begin{equation}\label{eq:SVAR}
\bm{Y}_t = \bm{\mu} + \bm{A}_0 \bm{Y}_t + \sum_{s=1}^{p} \bm{A}_{s} \bm{Y}_{t-s} + \bm{v}_t,\quad
t \in \mathbb{Z},
\end{equation}
where $\bm{A}_s \in \mathbb{R}^{k \times k}, s=0,1,\ldots, p$ are the coefficient matrices and 
$\{\bm{v}_t\}_{t \in \mathbb{Z}}$ is an i.i.d. innovation 
process such that $\E[\bm{v}_t] = \bm{0}$.
We suppose that 
$\bm{A}_0$ is a $k \times k$ 
lower-triangular matrix
with all diagonal components being zero,
which means that 
$Y_{i, t}$ cannot be the direct cause
of $Y_{j, t}$ for $i >j$.
\if0
The innovation process 
$\{\bm{v}_t\}_{t \in \mathbb{Z}}$
can be regarded as direct causes of 
$\bm{Y}_t$ which cannot be written by 
components of $\bm{Y}_{t-s}, s=0,\ldots,p$.
\fi
Several researchers have assumed that 
$\Var[\bm{v}_t]$ is a diagonal matrix.
Meanwhile, 
we consider the existence of 
contemporaneous confounding, 
indicating that 
$\bm{v}_{i, t}$ and $\bm{v}_{j, t}, i \neq j$
are correlated.
We now consider the following motivational example.
\begin{eg}\label{eg:SVAR-ex}
Let $\{{W}_t\}_{t \in \mathbb{Z}}$
be an $\mathbb{R}$-valued unobservable 
i.i.d. sequence.
We consider the following model:
\begin{equation}\label{eq:SVAR-ex}
\bm{Y}_t = \bm{A}_0 \bm{Y}_t + \sum_{s=1}^{p} \bm{A}_{s} \bm{Y}_{t-s} + \bm{A}_W{W}_t + \bm{u}_t,\quad
t \in \mathbb{Z},
\end{equation}
where $\bm{A}_s, s=1,\ldots,p \in \mathbb{R}^{3 \times 3}$, 
$\bm{A}_0 \in \mathbb{R}^{3 \times 3}$ is a 
lower-triangular matrix 
with diagonal elements being zero,
and 
$\{\bm{u}_t\}$ is an $\mathbb{R}^3$-valued 
i.i.d. sequence 
independent of ${W}_t$ with $\Var[\bm{u}_t] = \sigma_u^2 \bm{I}_3$, and 
$\bm{A}_W \in \mathbb{R}^{3 \times 1}$.
Thus, we can rewrite the model \eqref{eq:SVAR-ex} as follows:
\[
\bm{Y}_t = \bm{A}_0 \bm{Y}_t + \sum_{s=1}^{p} \bm{A}_{s} \bm{Y}_{t-s} + \bm{v}_t,\quad
t \in \mathbb{Z},
\]
where 
\[
\bm{v}_t =\bm{A}_W{W}_t + \bm{u}_t.
\]
Therefore, 
$\Var[\bm{v}_t]$ is not diagonal 
unless $\bm{A}_W \bm{A}_W^\top$ is diagonal.

The SVAR model can be 
regarded as a special case of 
linear structural equation models, as described in the Appendix.
For a linear structural equation model, 
we often consider a graphical representation.
Suppose that we observe 
$\bm{Y}_{1-p},\ldots,\bm{Y}_T$ for some $T \in \mathbb{N}$ and  
consider graph $\bm{G}$ with 
the vertex set  
$\tilde{\bm{V}} = \{Y_{i,s} : i =1,\ldots,k, s =1-p,\ldots,T\} \cup \{W_{s} :  s =1-p,\ldots,T \}$.
\begin{figure}[H]
    \centering
    \includegraphics[scale=0.5]{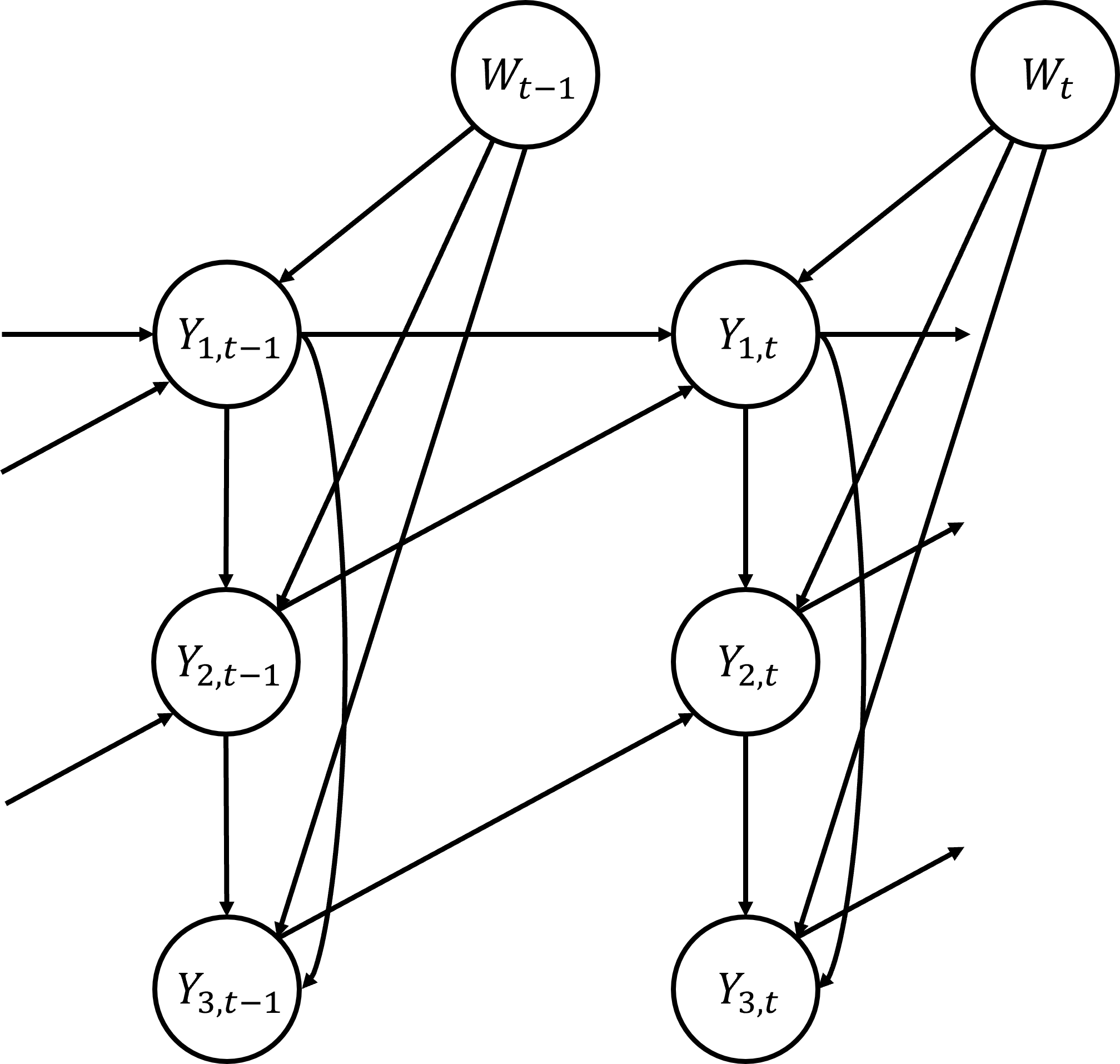}
    \caption{A graphical representation for the model \eqref{eq:SVAR-ex} with $p=1$}
    \label{fig:SVAR-ex}
\end{figure}
Figure \ref{fig:SVAR-ex} shows graph $\bm{G}$, 
where the edges represent the 
nonzero components 
of $\bm{A}_0, \bm{A}_1$ and 
$\bm{A}_W$.
Since the model structure does not depend on time, Figure \ref{fig:SVAR-ex} only shows
the components of $\bm{Y}_t$, their direct causes, $W_t$, and $W_{t-1}$
included in the vertex set.
Observe that $W_t, t=1-p,\ldots,T$ are (unobservable)
contemporaneous confounding factors, which 
implies correlations between the components of the 
innovation process $\{\bm{v}_t\}_{t \in \mathbb{Z}}$.
If we suppose that the 
covariance matrix of 
the innovation process is 
non-diagonal,
we cannot deal with such data 
using most of the conventional SVAR 
models.
\end{eg}
If the matrix $\bm{I}_k - \bm{A}_0$ is non-singular, then 
we have the reduced form of 
\eqref{eq:SVAR} as follows:
\begin{equation}\label{eq:SVAR-reduced}
\bm{Y}_t = \bm{Q} \bm{\mu} + \sum_{s=1}^{p} \bm{Q} \bm{A}_{s} \bm{Y}_{t-s} + \bm{Q}\bm{v}_t,
\end{equation}
where 
\[
\bm{Q} = (\bm{I}_k - \bm{A}_0)^{-1}. 
\]
This reduced form is a typical VAR model. 
Thus, to ensure 
the unique existence of 
a stationary and ergodic solution to 
\eqref{eq:SVAR}, 
it is sufficient to assume the following conditions.
\begin{assumption}\label{assump:stationary}
\begin{itemize}
\item[(i)]
$\bm{A}_0$ is a 
lower-triangle matrix with the diagonal elements being zero.
\item[(ii)]
The following polynomial
\[
\left| \bm{I}_{k} - \bm{Q}\bm{A}_1z - \bm{Q}\bm{A}_2z^2 - \cdots - \bm{Q}\bm{A}_pz^p \right|,\quad
z \in \mathbb{C}
\]
is nonzero for all $z \in \{\zeta \in \mathbb{C} : |\zeta| \leq 1\}$.
\end{itemize}
\end{assumption}
Condition (i) of Assumption \ref{assump:stationary}
guarantees that the model 
\eqref{eq:SVAR}
has a directed acyclic graph (DAG) structure.
Next, we consider the impulse response functions.
Consider the following VAR$(1)$ representation
of \eqref{eq:SVAR-reduced}:
\[
\bm{\xi}_t = \bm{\Lambda} \bm{\xi}_{t-1} + \bm{\epsilon}_t,
\]
where 
$\bm{\xi}_t = (\bm{Y}_t^\top,\ldots,\bm{Y}_{t-p+1}^\top)^\top$, 
$\bm{\epsilon}_t = (\bm{e}_t^\top,\bm{0}^\top,\ldots,\bm{0}^\top)^\top$, 
\[
\bm{\Lambda}
= \left(\begin{array}{ccccc}
\bm{B}_1 & \bm{B}_2 & \cdots & \bm{B}_{p-1} & \bm{B}_p \\
\bm{I}_k & \bm{O} & \cdots & \bm{O} & \bm{O} \\
\bm{O} & \bm{I}_k & \cdots & \bm{O} & \bm{O} \\
\vdots & \vdots & \ddots & \vdots & \vdots \\
\bm{O} & \bm{O} & \cdots & \bm{I}_k & \bm{O} \\
\end{array}
\right),
\]
and $\bm{B}_s = \bm{Q}\bm{A}_s$ for $s=1,\ldots,p$. 
We omit $\bm{\mu}$ here for simplicity because impulse responses do not depend on it.
Under Assumption \ref{assump:stationary}, 
we have the following moving average (MA) representation:
\begin{equation}\label{eq:SVAR-MA}
\bm{\xi}_t
= \sum_{s=0}^\infty \bm{\Lambda}^s \bm{\epsilon}_{t-s}.
\end{equation}
For the top $k$ rows in 
\eqref{eq:SVAR-MA}, 
there exist 
$\bm{\Psi}_0 = \bm{I}_k$ and 
$\bm{\Psi}_s \in \mathbb{R}^{k \times k}, s=1,2,\ldots$ such that 
\[
\bm{Y}_t
= \sum_{s=0}^\infty \bm{\Psi}_s \bm{e}_{t-s}
= \sum_{s=0}^\infty \bm{\Psi}_s \tilde{\bm{L}} \tilde{\bm{u}}_{t-s},
\]
where 
$\tilde{\bm{L}} \in \mathbb{R}^{k \times k}$
is a
lower-unitriangular 
matrix obtained by 
the LU decomposition of $\Var[\bm{e}_t]$
and 
$\tilde{\bm{u}}_t = \tilde{\bm{L}}^{-1}\bm{e}_t$.
Notably, $\bm{\Psi}_s$ coincides with the 
$k \times k$ sub-matrix of $\bm{\Lambda}^s$
corresponding to the first $k$ columns and rows.
Therefore, the orthogonalized impulse response
$\mathrm{OIRF}_{ij}(s)$ and 
non-orthogonalized impulse response $\mathrm{IRF}_{ij}(s)$ are, respectively, given by
\[
\mathrm{OIRF}_{ij}(s) = \left(
\bm{\Psi}_s \tilde{\bm{L}}
\right)_{ij},\quad
\mathrm{IRF}_{ij}(s) = \left(
\bm{\Psi}_s
\right)_{ij}.
\]
By considering the SVAR model as a 
special case of linear structural equation models, 
the coefficient matrices and 
impulse responses 
can be regarded as causal effects
(see the Appendix for the detail of such interpretations).
\section{Identification and statistical inference of  SVAR models}\label{sec:estimations}
\subsection{Estimations of the coefficient matrices}\label{subsec:coef}
In this section, we establish asymptotically  normal estimators for 
the coefficient matrices of 
model \eqref{eq:SVAR}.

We begin by considering 
the estimation method for the 
reduced form \eqref{eq:SVAR-reduced} of 
\eqref{eq:SVAR}.
We rewrite the model as follows:
\begin{equation}\label{eq:SVAR-structure sec4}
\bm{Y}_t = \bm{A}_0 \bm{Y}_t + \bm{A} \bm{X}_{t-1}
+ \bm{v}_t, 
\end{equation}
where 
$\bm{A} = (\bm{\mu}, \bm{A}_1,\ldots, \bm{A}_p)$
and 
$\bm{X}_{t-1} = (1, \bm{Y}_{t-1}^\top,\ldots,\bm{Y}_{t-p}^\top)^\top$.
The reduced form 
can be represented as follows.
\begin{equation}\label{eq:SVAR-reduced sec4}
\bm{Y}_t = \bm{B} \bm{X}_{t-1} + \bm{e}_t,
\end{equation}
where 
$\bm{B} = \bm{Q} \bm{A}$ and 
$\bm{e}_t = \bm{Q}\bm{v}_t$.
Suppose that we observe $(\bm{Y}_{1-p},\ldots,\bm{Y}_T)$.
We define the data matrix 
$\bm{Y}$ and the design matrix 
$\bm{X}$ as follows:
\[
\bm{Y} = (\bm{Y}_1,\ldots,\bm{Y}_T)^\top\quad
\mbox{and}\quad
\bm{X} = (\bm{X}_0,\ldots,\bm{X}_{T-1})^\top,
\]
respectively.
Let $r=1+kp$ and 
$\bm{\Theta} \subset \mathbb{R}^{k \times r}$
be a compact parameter space of $\bm{B}$
and $\bm{B}_*$ be the true value of $\bm{B}$.
Then, we consider the following least squares estimators $\hat{\bm{B}}_T$ and $\hat{\bm{b}}_T$
for $\bm{B}$ and $\ve(\bm{B})$:
\[
\hat{\bm{B}}_T = \bm{Y}^\top \bm{X} (\bm{X}^\top \bm{X})^{-1}
= \left(
\sum_{t=1}^T \bm{Y}_t \bm{X}_{t-1}^\top
\right)
\left(
\sum_{t=1}^T \bm{X}_{t-1} \bm{X}_{t-1}^\top
\right)^{-1},
\]
and 
\[
\hat{\bm{b}}_T
= \ve(\hat{\bm{B}}_T)
= \bm{b}_* + \left\{\left(
\frac{1}{T}
\sum_{t=1}^T \bm{X}_{t-1} \bm{X}_{t-1}^\top
\right)^{-1} \otimes \bm{I}_k\right\}
\left(
\frac{1}{T} \sum_{t=1}^T \bm{X}_{t-1} \otimes \bm{e}_t
\right),
\]
where $\bm{b}_* = \ve(\bm{B}_*)$.
To establish the asymptotic behavior of the 
estimator $\hat{\bm{b}}_T$, we assume the following conditions.
\begin{assumption}\label{assump:regularity LSE}
\begin{itemize}
\item[(i)]
It holds that
\[
\E[\|\bm{v}_t\|_2^4] < \infty,\quad
t \in \mathbb{Z}.
\]
\item[(ii)]
$\Var[\bm{v}_t]$ and 
$\E[\bm{X}_{t-1} \bm{X}_{t-1}^\top]$ are positive definite.
\item[(iii)]
The true value $\bm{B}_*$ is an interior point in  
the parameter space $\bm{\Theta} \subset \mathbb{R}^{k \times r}$.
\end{itemize}
\end{assumption}
Then, we have the following asymptotic normality 
of $\hat{\bm{b}}_T$.
\begin{prop}\label{prop:normality LSE B}
Under Assumptions \ref{assump:stationary} and \ref{assump:regularity LSE}, it holds that 
\[
\sqrt{T}(\hat{\bm{b}}_T - \bm{b}_*)
\to^d N(\bm{0}, \bm{\Gamma}^{-1} \otimes \bm{\Sigma}),\quad
T \to \infty,
\]
where 
$\bm{\Sigma} = \E[\bm{e}_t \bm{e}_t^\top]$
and $\bm{\Gamma} = \E[\bm{X}_{t-1} \bm{X}_{t-1}^\top]$.
\end{prop}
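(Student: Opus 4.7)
The plan is to exploit the explicit representation of $\hat{\bm{b}}_T$ already displayed above, namely $\sqrt{T}(\hat{\bm{b}}_T - \bm{b}_*) = \bigl\{ ( T^{-1} \sum_{t=1}^T \bm{X}_{t-1}\bm{X}_{t-1}^\top )^{-1} \otimes \bm{I}_k \bigr\} \cdot T^{-1/2} \sum_{t=1}^T (\bm{X}_{t-1} \otimes \bm{e}_t)$, and then combine a law of large numbers for the first factor with a martingale central limit theorem for the second, concluded by Slutsky's theorem.

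For the first factor, I would invoke Assumption \ref{assump:stationary}: part (ii) is the standard stability condition for the reduced-form VAR \eqref{eq:SVAR-reduced sec4}, so the stacked companion process $\{\bm{\xi}_t\}$ admits a unique stationary and ergodic solution given by \eqref{eq:SVAR-MA}, and stationarity and ergodicity of $\{(\bm{Y}_t,\bm{X}_{t-1})\}$ follow. Since $\bm{e}_t = \bm{Q}\bm{v}_t$ has a finite fourth moment by Assumption \ref{assump:regularity LSE}(i) and the powers $\bm{\Lambda}^s$ decay geometrically, $\E[\bm{X}_{t-1}\bm{X}_{t-1}^\top]$ is finite. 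Birkhoff's ergodic theorem then gives $T^{-1}\sum_{t=1}^T \bm{X}_{t-1}\bm{X}_{t-1}^\top \plim \bm{\Gamma}$, and invertibility via Assumption \ref{assump:regularity LSE}(ii) together with the continuous mapping theorem yields convergence in probability of the first factor to $\bm{\Gamma}^{-1} \otimes \bm{I}_k$.

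For the second factor, let $\F_t = \sigma(\bm{v}_s : s \leq t)$ and set $\bm{\zeta}_t := \bm{X}_{t-1} \otimes \bm{e}_t$. Because $\bm{X}_{t-1}$ is $\F_{t-1}$-measurable and $\{\bm{v}_t\}$ (hence $\{\bm{e}_t\}$) is i.i.d.\ mean-zero, $\{\bm{\zeta}_t,\F_t\}$ is a vector martingale difference sequence with conditional covariance $\E[\bm{\zeta}_t \bm{\zeta}_t^\top \mid \F_{t-1}] = \bm{X}_{t-1}\bm{X}_{t-1}^\top \otimes \bm{\Sigma}$; averaging over $t$ and applying the ergodic theorem again yields convergence in probability of the average conditional covariance to $\bm{\Gamma} \otimes \bm{\Sigma}$. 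The Lindeberg condition $T^{-1}\sum_{t=1}^T \E[\|\bm{\zeta}_t\|_2^2 \ind\{\|\bm{\zeta}_t\|_2 > \epsilon \sqrt{T}\}] \to 0$ follows from $\E\|\bm{\zeta}_t\|_2^4 \leq (\E\|\bm{X}_{t-1}\|_2^4)(\E\|\bm{e}_t\|_2^4) < \infty$, where the fourth moment of $\bm{X}_{t-1}$ is controlled using Minkowski's inequality applied to the MA representation together with the absolute summability of the coefficients induced by Assumption \ref{assump:stationary}(ii). The martingale CLT (e.g., Brown's theorem) then gives $T^{-1/2}\sum_{t=1}^T \bm{\zeta}_t \dlim N(\bm{0}, \bm{\Gamma} \otimes \bm{\Sigma})$.

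Finally, combining the two pieces by Slutsky's theorem and using the Kronecker product identities $(\bm{A}\otimes\bm{C})(\bm{B}\otimes\bm{D}) = (\bm{A}\bm{B})\otimes(\bm{C}\bm{D})$ and $(\bm{\Gamma}^{-1}\otimes \bm{I}_k)(\bm{\Gamma}\otimes\bm{\Sigma})(\bm{\Gamma}^{-1}\otimes\bm{I}_k)^\top = \bm{\Gamma}^{-1}\otimes\bm{\Sigma}$ yields the claimed asymptotic normality. The main technical obstacle is controlling the fourth moment of $\bm{X}_{t-1}$ uniformly in $t$; this is precisely where the $L^4$ hypothesis on $\bm{v}_t$ in Assumption \ref{assump:regularity LSE}(i) is used, combined with the geometric decay of the companion-matrix powers implied by the stability condition in Assumption \ref{assump:stationary}(ii). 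Everything else is a standard VAR-with-i.i.d.-innovations argument.
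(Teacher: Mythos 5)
Your argument is correct and is precisely the standard proof the paper points to: it omits the proof of Proposition \ref{prop:normality LSE B} and refers to L\"utkepohl and Hamilton, whose argument is exactly your decomposition into the ergodic law of large numbers for $T^{-1}\sum_t \bm{X}_{t-1}\bm{X}_{t-1}^\top$, a martingale-difference CLT for $T^{-1/2}\sum_t \bm{X}_{t-1}\otimes\bm{e}_t$, and Slutsky with the Kronecker identities. No gaps; the use of the fourth-moment condition to control $\E\|\bm{X}_{t-1}\|_2^4$ for the Lindeberg step is the right place to invoke Assumption \ref{assump:regularity LSE}(i).
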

The proof can be found in, e.g., 
\cite{Lutkepohl2013} or \cite{Hamilton1994},
therefore, we omit it.

Next, we introduce 
estimators for $\bm{Q}$, $\bm{A}_0$, and $\bm{A}$ 
using LU decomposition.
Denote 
\[
\bm{A} = (\bm{a}_1,\bm{a}_2,\ldots,\bm{a}_r).
\]
where $\bm{a}_1= \bm{\mu}$.
The following condition is  
sufficient to construct 
estimators for $\bm{Q}$, $\bm{A}_0$, and $\bm{A}$ 
via LU decomposition
and derive its asymptotic behavior.
\begin{assumption}\label{assump:regularity A}
There exists a (known) $k$-tuple of distinct 
positive numbers $(j_1,\ldots,j_k)$ such that 
$(\bm{a}_{j_1 *},\ldots,\bm{a}_{j_k *})$
is a non-singular 
upper-triangular matrix, 
where 
$\bm{A}_*$ is the true value of $\bm{A}$.
\end{assumption}
Let $g: \mathbb{R}^{k \times r} \to \mathbb{R}^{k \times k}$ be the map 
defined as follows 
\[
g(\bm{C}) = (\bm{c}_{j_1},\ldots,\bm{c}_{j_k}),
\]
where $\bm{c}_m$ is the $m$-th column of the 
matrix $\bm{C}$.
Similarly to
\cite{Rubio-Ramirez2010},  
we should impose zero restrictions for 
lower-triangular
part of $g(\bm{A})$ based on 
the data background
(see Section \ref{subsec:real data} 
for a concrete example).

Note that 
$g(\bm{B}) = \bm{Q} g(\bm{A})$. 
For a nonsingular matrix 
$\bm{C} \in \mathbb{R}^{k \times k}$ 
which allows 
an LU-decomposition with 
a lower-unitriangular matrix, 
we introduce the following notation:
\[
\bm{C} = \bm{L}(\bm{C}) \bm{U}(\bm{C}).
\]
Let $\bm{q}, \bm{a}_0$, and $\bm{a}$ be the  vectorizations  
of $\bm{Q}, \bm{A}_0$, and $\bm{A}$, respectively, 
i.e., 
\[
\bm{q} = \ve(\bm{Q}),\quad
\bm{a}_0 = \ve(\bm{A}_0),\quad
\mbox{and}
\quad
\bm{a} = \ve(\bm{A}).
\]
We define the estimators for $\bm{q}, \bm{a}_0$, and $\bm{a}$ as follows.
\[
\hat{\bm{q}}_T 
= f_1(\hat{\bm{b}}_T) := \ve(\bm{L}_g(\hat{\bm{B}}_T)),
\]
\[
\hat{\bm{a}}_{0T} 
= f_2(\hat{\bm{b}}_T) := \ve(\bm{I}_k - \bm{L}_g(\hat{\bm{B}}_T)^{-1}),
\]
and 
\[
\hat{\bm{a}}_{T} 
= f_3(\hat{\bm{b}}_T) := \left(\bm{I}_r \otimes 
\bm{L}_g(\hat{\bm{B}}_T)\right)^{-1}\hat{\bm{b}}_T,
\]
where $\bm{L}_g := \bm{L} \circ g$.
The asymptotic normality of 
the estimators follows from the delta method.
\begin{thm}\label{thm:normality A}
Suppose that Assumptions \ref{assump:stationary}, 
\ref{assump:regularity LSE}, and \ref{assump:regularity A} hold.
Then, the following conditions hold:
\begin{equation}\label{eq:normality q}
\sqrt{T}(\hat{\bm{q}}_T - \bm{q}_*)
\to^d N(\bm{0}, \bm{\Sigma}_1),
\end{equation}
\begin{equation}\label{eq:normality ao}
\sqrt{T}(\hat{\bm{a}}_{0T} - \bm{a}_{0*})
\to^d N(\bm{0}, \bm{\Sigma}_2),
\end{equation}
\begin{equation}\label{eq:normality a}
\sqrt{T}(\hat{\bm{a}}_T - \bm{a}_*)
\to^d N(\bm{0}, \bm{\Sigma}_3),
\end{equation}
as $T \to \infty$,
where 
\[
\bm{\Sigma}_l = \bm{J}_l \bm{\Sigma}_b \bm{J}_l^\top
\]
with 
$\bm{\Sigma}_b = \bm{\Gamma}^{-1} \otimes \bm{\Sigma}$ and  
\[
\bm{J}_l = \left.\frac{\partial f_l(\bm{b})}{\partial \bm{b}}\right|_{\bm{b} = \bm{b}_*},\quad
l=1,2,3.
\]
\end{thm}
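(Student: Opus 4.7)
The plan is to deduce all three statements by applying the delta method to the conclusion of Proposition \ref{prop:normality LSE B}, namely $\sqrt{T}(\hat{\bm{b}}_T - \bm{b}_*) \to^d N(\bm{0}, \bm{\Sigma}_b)$ with $\bm{\Sigma}_b = \bm{\Gamma}^{-1}\otimes\bm{\Sigma}$. Once I know that each $f_l$ is continuously differentiable in a neighborhood of $\bm{b}_*$ and that $f_l(\bm{b}_*)$ coincides with the true value of the corresponding parameter, the stated normal limits follow immediately, with covariance $\bm{J}_l \bm{\Sigma}_b \bm{J}_l^\top$.

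The first step is to verify that the LU decomposition used to define $\bm{L}_g$ is meaningful at $\bm{b}_*$. By Assumption \ref{assump:stationary}(i), $\bm{A}_{0*}$ is strictly lower-triangular, hence $\bm{Q}_* = (\bm{I}_k - \bm{A}_{0*})^{-1}$ is lower-unitriangular. By Assumption \ref{assump:regularity A}, $g(\bm{A}_*)$ is a nonsingular upper-triangular matrix. Since $g(\bm{B}_*) = \bm{Q}_* g(\bm{A}_*)$, the factorization $g(\bm{B}_*) = \bm{Q}_* \cdot g(\bm{A}_*)$ already exhibits a valid LU decomposition with lower-unitriangular $\bm{L}$, so by uniqueness $\bm{L}_g(\bm{B}_*) = \bm{Q}_*$ and $\bm{U}(g(\bm{B}_*)) = g(\bm{A}_*)$. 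Using $\ve(\bm{Q}_* \bm{A}_*) = (\bm{I}_r \otimes \bm{Q}_*)\bm{a}_*$, it is straightforward to check $f_1(\bm{b}_*) = \bm{q}_*$, $f_2(\bm{b}_*) = \ve(\bm{I}_k - \bm{Q}_*^{-1}) = \bm{a}_{0*}$ and $f_3(\bm{b}_*) = (\bm{I}_r \otimes \bm{Q}_*)^{-1}\bm{b}_* = \bm{a}_*$.

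Next I would verify differentiability. Because $g(\bm{B}_*)$ is a product of a unitriangular and a nonsingular triangular matrix, all its leading principal minors are nonzero. The LU factorization of a matrix $\bm{C}$ with nonvanishing leading principal minors is given by rational (hence $C^\infty$) functions of the entries of $\bm{C}$, so $\bm{L}_g$ is continuously differentiable on an open neighborhood of $\bm{B}_*$; matrix inversion and the linear map $\bm{b}\mapsto(\bm{I}_r\otimes\bm{L}_g(\bm{B}))^{-1}\bm{b}$ inherit this smoothness on the same neighborhood. Consequently $f_1, f_2, f_3$ are $C^1$ at $\bm{b}_*$, and the classical delta method applied to Proposition \ref{prop:normality LSE B} yields \eqref{eq:normality q}, \eqref{eq:normality ao}, and \eqref{eq:normality a}.

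The step I expect to be the main obstacle is the differentiability of $\bm{L}_g$. Although the LU map is known to be smooth on the set of matrices with nonzero leading principal minors, one should argue this carefully, e.g., by writing the entries of $\bm{L}(\bm{C})$ recursively through ratios of leading principal minors of $\bm{C}$ and observing that all the denominators are nonzero at $\bm{C} = g(\bm{B}_*)$. Once this is in place, the computation of the Jacobians $\bm{J}_l$ reduces to the chain rule combined with the standard identities $\dif(\bm{M}^{-1}) = -\bm{M}^{-1}(\dif \bm{M})\bm{M}^{-1}$ and $\ve(\bm{A}\bm{X}\bm{B}) = (\bm{B}^\top\otimes\bm{A})\ve(\bm{X})$, and need not be written out explicitly since the theorem states the covariances only through the Jacobians.
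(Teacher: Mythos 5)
Your proposal is correct and follows essentially the same route as the paper's proof: verify that $f_l(\bm{b}_*)$ recovers the true parameters (via the identity $g(\bm{B}_*)=\bm{Q}_*g(\bm{A}_*)$ and uniqueness of the LU factorization with lower-unitriangular factor), establish differentiability of $\bm{L}_g$ at $\bm{b}_*$ from the fact that the LU entries are rational functions with nonvanishing denominators when all leading principal minors of $g(\bm{B}_*)$ are nonzero, and then apply the delta method to Proposition \ref{prop:normality LSE B}. Your explicit appeal to uniqueness of the LU decomposition to identify $\bm{L}_g(\bm{B}_*)=\bm{Q}_*$ is, if anything, slightly more careful than the paper's version of the same step.
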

The asymptotic covariance matrices 
are singular because some components are fixed at $0$ or $1$ by LU decomposition.
\subsection{Estimation of impulse responses}\label{subsec:impulse}
Next, we construct estimators for the impulse response $\mathrm{IRF}_{ij}(h)$ for 
$i, j = 1,2,\ldots,k$ and $h>0$.
The matrix $\bm{\Psi}_h = (\mathrm{IRF}_{ij}(h))_{i, j = 1,\ldots,k}$ satisfies 
\[
\bm{\Psi}_h = [\bm{\Lambda}^h]_k^k,
\] 
where $[\bm{\Lambda}^h]_k^k$ is the $k \times k$ sub-matrix of $\bm{\Lambda}^h$ corresponding to the top $k$ columns and rows.
Therefore, there exists a differentiable map 
$f_{4, h}$ such that 
$f_{4, h}(\bm{b}) = \bm{\psi}_h$, where 
$\bm{\psi}_h = \ve(\bm{\Psi}_h)$.
We define an estimator for $\bm{\psi}_h$
by $\hat{\bm{\psi}}_{h T} = f_{4, h}(\hat{\bm{b}}_T)$.
The following proposition was proved by 
\cite{lutkepohl1990}.
\begin{prop}\label{prop:estimator psi}
Suppose that assumptions \ref{assump:stationary}, 
\ref{assump:regularity LSE}, and \ref{assump:regularity A} hold.
For every $h >0$, it holds that 
\begin{equation}\label{eq:normality psi}
\sqrt{T}(\hat{\bm{\psi}}_{hT} - \bm{\psi}_{h*})
\to^d N(\bm{0}, \bm{\Sigma}_{4, h}),\quad
T \to \infty,
\end{equation}
where 
$\bm{\psi}_{h*} = \ve([\bm{\Lambda}_*^h]_k^k)$,
\[
\bm{\Lambda}_*
=\left(\begin{array}{ccccc}
\bm{B}_{1*} & \bm{B}_{2*} & \cdots & \bm{B}_{p-1*} & \bm{B}_{p*} \\
\bm{I}_k & \bm{O} & \cdots & \bm{O} & \bm{O} \\
\bm{O} & \bm{I}_k & \cdots & \bm{O} & \bm{O} \\
\vdots & \vdots & \ddots & \vdots & \vdots \\
\bm{O} & \bm{O} & \cdots & \bm{I}_k & \bm{O} \\
\end{array}
\right),
\]
and $\bm{\Sigma}_{4, h} = \bm{J}_{4, h}\bm{\Sigma}_{\bm{b}} \bm{J}_{4, h}^\top$ with 
\[
\bm{J}_{4, h} = \left.\frac{\partial f_{4, h}(\bm{b})}{\partial \bm{b}}\right|_{\bm{b} = \bm{b}_*}.
\]
\end{prop}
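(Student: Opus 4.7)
The plan is to reduce the claim to the delta method applied to Proposition \ref{prop:normality LSE B}. First, I would make the smoothness of $f_{4,h}$ explicit. Writing $\bm{b} = \ve(\bm{B})$ with $\bm{B} = (\bm{\eta}, \bm{B}_1,\ldots,\bm{B}_p)$, the companion matrix $\bm{\Lambda}(\bm{b})$ is assembled by selecting the coordinates of $\bm{b}$ corresponding to $\bm{B}_1,\ldots,\bm{B}_p$ and placing them into the top $k$ rows, with fixed $\bm{I}_k$ and $\bm{O}$ blocks elsewhere; hence each entry of $\bm{\Lambda}(\bm{b})$ is either a fixed constant or a single coordinate of $\bm{b}$. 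The map $\bm{b} \mapsto \bm{\Lambda}(\bm{b})^h$ is therefore polynomial of degree at most $h$ in the entries of $\bm{b}$, and so is its top-left $k \times k$ block $[\bm{\Lambda}(\bm{b})^h]_k^k$; vectorization is linear. Consequently $f_{4,h}$ is $C^\infty$, and in particular differentiable at $\bm{b}_*$.

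Next I would invoke Proposition \ref{prop:normality LSE B}, which gives $\sqrt{T}(\hat{\bm{b}}_T - \bm{b}_*) \to^d N(\bm{0}, \bm{\Sigma}_{\bm{b}})$ with $\bm{\Sigma}_{\bm{b}} = \bm{\Gamma}^{-1} \otimes \bm{\Sigma}$, under Assumptions \ref{assump:stationary} and \ref{assump:regularity LSE}. Applying the delta method to $f_{4,h}$ at $\bm{b}_*$ then yields
\[
\sqrt{T}(\hat{\bm{\psi}}_{hT} - \bm{\psi}_{h*})
= \sqrt{T}\bigl(f_{4,h}(\hat{\bm{b}}_T) - f_{4,h}(\bm{b}_*)\bigr)
\to^d N(\bm{0}, \bm{J}_{4,h} \bm{\Sigma}_{\bm{b}} \bm{J}_{4,h}^\top),
\]
which is the claimed convergence \eqref{eq:normality psi}. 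Note that Assumption \ref{assump:regularity A} is not actually needed for this step; it enters only if one wishes to translate impulse responses back into structural quantities.

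If desired, the Jacobian $\bm{J}_{4,h}$ can be given explicitly. By the product-rule identity $\dif(\bm{\Lambda}^h) = \sum_{j=0}^{h-1} \bm{\Lambda}^j (\dif \bm{\Lambda}) \bm{\Lambda}^{h-1-j}$, vectorizing and using $\ve(\bm{A}\bm{X}\bm{B}) = (\bm{B}^\top \otimes \bm{A})\ve(\bm{X})$, one obtains
\[
\dif \ve(\bm{\Lambda}^h)
= \sum_{j=0}^{h-1} \bigl((\bm{\Lambda}_*^{h-1-j})^\top \otimes \bm{\Lambda}_*^{j}\bigr)\, \dif \ve(\bm{\Lambda}),
\]
and $\bm{J}_{4,h}$ results from pre-composing with the linear injection $\bm{b} \mapsto \ve(\bm{\Lambda}(\bm{b}))$ and post-composing with the linear projection that extracts the vectorized top-left $k \times k$ block. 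The one nontrivial point to check is differentiability of the map that picks the block, but since projection onto a coordinate subspace is linear, this is automatic.

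The main obstacle, such as it is, would only be bookkeeping the selection/projection matrices carefully when one wants a closed form for $\bm{J}_{4,h}$; the convergence itself follows immediately from the delta method, which is exactly why the conclusion is stated at the level of $\sqrt{T}$-asymptotic normality without a more explicit covariance.
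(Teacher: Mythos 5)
Your proof is correct and takes exactly the route the paper intends: the paper omits the argument, stating only that the result is a direct consequence of the delta method applied to Proposition \ref{prop:normality LSE B} (citing L\"utkepohl (1990) for the explicit form of $\bm{J}_{4,h}$), and your write-up simply makes the smoothness of $f_{4,h}$ and the Jacobian computation explicit. Your observation that Assumption \ref{assump:regularity A} is not needed for this particular statement is also accurate.
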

The general expression of 
each component of $\bm{J}_{4, h}$ can be found in 
\cite{lutkepohl1990}. 

Let $\bm{\Psi}_{h*}^{\mathrm{o}}, h>0$ 
be a matrix defined by 
\[
\bm{\Psi}_{h*}^{\mathrm{o}}
= \bm{\Psi}_{h*}\bm{Q}_{*},
\]
which corresponds to a ``total effect'' for a 
linear structural equation model as described in 
the Appendix.
We define the estimator for 
$\bm{\psi}_{h*, T}^{\mathrm{o}}= \ve(\bm{\Psi}_{h*}\bm{Q}_{*})$ 
as follows.
\[
\hat{\bm{\psi}}_{hT}^{\mathrm{o}}
= f_{5, h}(\hat{\bm{b}}_T)
:= \left(
\bm{L}_g(\hat{\bm{B}}_T)^\top \otimes \bm{I}_k
\right) f_{4, h}(\hat{\bm{b}}_T).
\]
Because the map $f_{5, h}$ for every $h$ 
is differentiable, 
we obtain the following proposition based on the delta method.
\begin{prop}\label{prop:estimator total effect}
Suppose that Assumptions \ref{assump:stationary}, 
\ref{assump:regularity LSE}, and \ref{assump:regularity A} hold.
Then, it holds that 
\begin{equation}\label{eq:normality total effect}
\sqrt{T}(\hat{\bm{\psi}}_{hT}^{\mathrm{o}} - \bm{\psi}_{h *}^{\mathrm{o}})
\to^d N(\bm{0}, \bm{\Sigma}_{5, h})
%N(\bm{0}, \bm{J}_{5, h}\bm{\Sigma}_{\bm{b}} \bm{J}_{5, h}^\top),\quad
,\quad T \to \infty,
\end{equation}
where 
$\bm{\psi}_{h *}^{\mathrm{o}} = \ve(\bm{\Psi}_{h *} \bm{Q}_*)$, 
$\bm{Q}_*$ is the true value of $\bm{Q}$, and
$\bm{\Sigma}_{5, h} = \bm{J}_{5, h}\bm{\Sigma}_{\bm{b}} \bm{J}_{5, h}^\top$ with 
\[
\bm{J}_{5, h} = \left.\frac{\partial f_5(\bm{b})}{\partial \bm{b}}\right|_{\bm{b} = \bm{b}_*}.
\]
\end{prop}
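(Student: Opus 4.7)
The plan is to deduce the proposition directly from Proposition \ref{prop:normality LSE B} by the delta method, exactly as in the proof of Theorem \ref{thm:normality A}. Since Proposition \ref{prop:normality LSE B} gives $\sqrt{T}(\hat{\bm{b}}_T-\bm{b}_*)\to^d N(\bm{0},\bm{\Sigma}_{\bm{b}})$ with $\bm{\Sigma}_{\bm{b}}=\bm{\Gamma}^{-1}\otimes\bm{\Sigma}$, the only real work is to verify that $f_{5,h}$ is continuously differentiable in a neighborhood of $\bm{b}_*$ and to identify the limit $f_{5,h}(\bm{b}_*)$ with $\bm{\psi}_{h*}^{\mathrm o}$.

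First I would decompose $f_{5,h}$ into a composition of maps whose smoothness is either elementary or already in hand: (a) $\bm{b}\mapsto \bm{B}$ (linear reshape); (b) $\bm{B}\mapsto g(\bm{B})$ (column selection, linear); (c) $g(\bm{B})\mapsto \bm{L}(g(\bm{B}))$ (the lower--unitriangular factor of the LU decomposition); (d) $\bm{L}_g(\bm{B})\mapsto \bm{L}_g(\bm{B})^{\top}\otimes\bm{I}_k$ (linear); (e) $\bm{b}\mapsto f_{4,h}(\bm{b})$, differentiable by Proposition \ref{prop:estimator psi} and \cite{lutkepohl1990}; and finally (f) the product of (d) and (e), which is bilinear and hence smooth. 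The delta-method machinery then applies componentwise along the chain rule, giving $\bm{J}_{5,h}$ as the Jacobian at $\bm{b}_*$.

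The only nontrivial step is (c). Here I would observe that under Assumption \ref{assump:stationary}(i) the matrix $\bm{Q}_{*}=(\bm{I}_k-\bm{A}_{0*})^{-1}$ is lower--unitriangular, while Assumption \ref{assump:regularity A} gives that $g(\bm{A}_*)$ is non-singular upper--triangular; consequently $g(\bm{B}_*)=\bm{Q}_{*}\,g(\bm{A}_*)$ is exactly its (unique) LU decomposition, and all leading principal minors of $g(\bm{B}_*)$ are non-zero. By the implicit function theorem applied to the defining equations of the LU factorization, the map $\bm{C}\mapsto(\bm{L}(\bm{C}),\bm{U}(\bm{C}))$ is then $C^{\infty}$ on a neighborhood of $g(\bm{B}_*)$, which is the one point where the argument could conceivably fail and thus the main obstacle to address.

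Finally, evaluating at the truth, $\bm{L}_g(\bm{B}_*)=\bm{Q}_{*}$, so
\[
f_{5,h}(\bm{b}_*)=(\bm{Q}_{*}^{\top}\otimes\bm{I}_k)\,\ve(\bm{\Psi}_{h*})=\ve(\bm{\Psi}_{h*}\bm{Q}_{*})=\bm{\psi}_{h*}^{\mathrm o},
\]
so that $\sqrt{T}(\hat{\bm{\psi}}_{hT}^{\mathrm o}-\bm{\psi}_{h*}^{\mathrm o})=\sqrt{T}(f_{5,h}(\hat{\bm{b}}_T)-f_{5,h}(\bm{b}_*))$. The delta method then yields the stated convergence with $\bm{\Sigma}_{5,h}=\bm{J}_{5,h}\bm{\Sigma}_{\bm{b}}\bm{J}_{5,h}^{\top}$, completing the proof. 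If desired, an explicit form of $\bm{J}_{5,h}$ can be written using the product rule on $f_{5,h}=(\bm{L}_g^{\top}\otimes\bm{I}_k)f_{4,h}$, combined with the closed-form expression for $\partial f_{4,h}/\partial\bm{b}$ from \cite{lutkepohl1990} and the known formula for the differential of the LU factors.
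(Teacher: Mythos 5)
Your proposal is correct and follows essentially the same route as the paper, which simply invokes the delta method (the paper omits the proof of Proposition \ref{prop:estimator total effect} as a ``direct consequence of the delta method,'' with the differentiability of $\bm{L}_g$ at $g(\bm{B}_*)$ already established in the proof of Theorem \ref{thm:normality A}). Your only deviation --- justifying smoothness of the LU factor map via the implicit function theorem rather than the paper's observation that the factors are rational functions of the entries --- is an equally valid verification of the same sub-step.
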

\subsection{Tests for simultaneous relationships}\label{subsec:test}
In this section, we consider the following 
test:
\begin{equation}\label{eq:test A0}
\mathcal{H}_0:\ 
\bm{A}_0 = O,\quad
\mathcal{H}_1:\ \bm{A}_0 \neq O.
\end{equation}
Let $\hat{\bm{e}}_t = \bm{Y}_t - \hat{\bm{B}}_T \bm{X}_{t-1}$ and 
\[
\hat{\bm{\Sigma}}_{bT}
= \left(
\frac{1}{T} \sum_{t=1}^T \bm{X}_{t-1} \bm{X}_{t-1}^\top
\right)^{-1}
\otimes \left(
\frac{1}{T} \sum_{t=1}^T \hat{\bm{e}}_t \hat{\bm{e}}_t^\top
\right).
\]
The estimators 
for $\bm{\Sigma}_l, l=1,2,3$ and $\bm{\Sigma}_{l, h}, l=4, 5$ can be constructed as follows:
\[
\hat{\bm{\Sigma}}_l
= \hat{\bm{J}}_l \hat{\bm{\Sigma}}_{\bm{b}}\hat{\bm{J}}_l^\top,\quad l=1,2,3,
\]
and 
\[
\hat{\bm{\Sigma}}_{l, h}
= \hat{\bm{J}}_{l, h} \hat{\bm{\Sigma}}_{\bm{b}}\hat{\bm{J}}_{l, h}^\top,\quad l=4, 5
\]
with 
\[
\hat{\bm{J}}_{l T}=\left.\frac{\partial f_l(\bm{b})}{\partial \bm{b}}\right|_{\bm{b} = \hat{\bm{b}}_{lT}},\quad
l=1,2,3
\]
and 
\[
\hat{\bm{J}}_{l, h T}=\left.\frac{\partial f_{l, h}(\bm{b})}{\partial \bm{b}}\right|_{\bm{b} = \hat{\bm{b}}_{lT}},\quad
l=4, 5.
\]
The following lemma is obtained from 
the asymptotic normality of the estimators.
\begin{lem}\label{lem:estimation sub-vector}
Suppose that Assumptions \ref{assump:stationary}, 
\ref{assump:regularity LSE}, and \ref{assump:regularity A} hold.
For every 
$\bm{w}_1, \bm{w}_2 \in \mathbb{R}^{k^2}$, 
and $\bm{w}_3 \in \mathbb{R}^{k(1+kp)}$,
let 
\[
z_{1T}(\bm{w}_{1})
:= \sqrt{T} (\bm{w}_{1}^\top \hat{\bm{\Sigma}}_{1} \bm{w}_{1})^{-1/2} \bm{w}_{1}^\top
(\hat{\bm{q}}_T - \bm{q}_*),
\]
\[
z_{2T}(\bm{w}_{2})
:= \sqrt{T} (\bm{w}_{2}^\top \hat{\bm{\Sigma}}_{2} \bm{w}_{2})^{-1/2} \bm{w}_{2}^\top
(\hat{\bm{a}}_{0T} - \bm{a}_{0*}).
\]
and 
\[
z_{3T}(\bm{w}_{3}) := \sqrt{T} (\bm{w}_{3}^\top \hat{\bm{\Sigma}}_{\bm{b}} \bm{w}_{3})^{-1/2} \bm{w}_{3}^\top
(\hat{\bm{b}}_T - \bm{b}_*).
\]
Then, it holds that 
\begin{equation}\label{eq:normality comp b}
z_{lT}(\bm{w}_{l})
\to^d N(0, 1), \quad T \to \infty,\quad
l=1,2,3.
\end{equation}
\end{lem}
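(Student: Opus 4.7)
The plan is to derive each of the three statements as a direct consequence of the asymptotic normality results already established (Proposition~\ref{prop:normality LSE B} and Theorem~\ref{thm:normality A}) combined with Slutsky's theorem, once consistency of the studentizing matrices $\hat{\bm{\Sigma}}_b$, $\hat{\bm{\Sigma}}_1$, and $\hat{\bm{\Sigma}}_2$ is in place. The arguments for $l=1,2,3$ are structurally identical, so I will sketch the reasoning for $l=1$ and note that the other two follow verbatim with appropriate substitutions.

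First I would fix $\bm{w}_1 \in \mathbb{R}^{k^2}$ and, assuming $\bm{w}_1^\top \bm{\Sigma}_1 \bm{w}_1 > 0$ (otherwise the statement is vacuous by the convention $0/0$), apply the continuous mapping theorem to the linear functional $\bm{x} \mapsto \bm{w}_1^\top \bm{x}$ in \eqref{eq:normality q}, obtaining
\[
\sqrt{T}\,\bm{w}_1^\top(\hat{\bm{q}}_T - \bm{q}_*) \to^d N(0, \bm{w}_1^\top \bm{\Sigma}_1 \bm{w}_1),\quad T \to \infty.
\]
Second, I would establish the consistency $\hat{\bm{\Sigma}}_{bT} \to^p \bm{\Sigma}_b$. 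Under Assumption \ref{assump:stationary}, $\{\bm{Y}_t\}$ is stationary and ergodic, so the ergodic theorem yields $T^{-1}\sum_{t=1}^T \bm{X}_{t-1}\bm{X}_{t-1}^\top \to^p \bm{\Gamma}$. For the residual second-moment matrix, I would expand
\[
\hat{\bm{e}}_t \hat{\bm{e}}_t^\top
= \bm{e}_t \bm{e}_t^\top - (\hat{\bm{B}}_T - \bm{B}_*)\bm{X}_{t-1}\bm{e}_t^\top - \bm{e}_t \bm{X}_{t-1}^\top (\hat{\bm{B}}_T - \bm{B}_*)^\top + (\hat{\bm{B}}_T - \bm{B}_*)\bm{X}_{t-1}\bm{X}_{t-1}^\top(\hat{\bm{B}}_T - \bm{B}_*)^\top,
\]
and use the ergodic theorem together with $\hat{\bm{B}}_T \to^p \bm{B}_*$ (implied by Proposition~\ref{prop:normality LSE B}) to conclude that $T^{-1}\sum_{t=1}^T \hat{\bm{e}}_t \hat{\bm{e}}_t^\top \to^p \bm{\Sigma}$. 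The tensor-product structure of $\hat{\bm{\Sigma}}_{bT}$ then yields $\hat{\bm{\Sigma}}_{bT} \to^p \bm{\Sigma}_b$.

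Third, I would argue consistency of the plug-in Jacobian $\hat{\bm{J}}_{1T}$. Since the map $\bm{b} \mapsto \bm{L}_g(\bm{B})$ is $C^1$ in a neighbourhood of $\bm{b}_*$ (the LU decomposition of a matrix whose leading principal minors are nonzero is a smooth function of its entries, which is guaranteed by Assumption~\ref{assump:regularity A} applied to $g(\bm{B}_*)$), the partial derivative $\partial f_1/\partial \bm{b}$ is continuous at $\bm{b}_*$. Combining with $\hat{\bm{b}}_T \to^p \bm{b}_*$ and the continuous mapping theorem gives $\hat{\bm{J}}_{1T} \to^p \bm{J}_1$, hence $\hat{\bm{\Sigma}}_1 \to^p \bm{\Sigma}_1$ and therefore $\bm{w}_1^\top \hat{\bm{\Sigma}}_1 \bm{w}_1 \to^p \bm{w}_1^\top \bm{\Sigma}_1 \bm{w}_1 > 0$. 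Slutsky's theorem then yields \eqref{eq:normality comp b} for $l=1$. For $l=2$ the same smoothness argument (now for $\bm{L}_g(\cdot)^{-1}$, which is well-defined near $\bm{B}_*$ since $\bm{L}_g$ is lower-unitriangular, hence nonsingular) gives the conclusion, and for $l=3$ only the consistency of $\hat{\bm{\Sigma}}_{bT}$ is required.

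The main obstacle I anticipate is the verification that the Jacobians $\hat{\bm{J}}_{lT}$ are consistent — specifically, that the LU-decomposition map is differentiable (and that its derivative is continuous) on a neighbourhood of $\bm{b}_*$. Once this smoothness of $\bm{L}_g$ near $g(\bm{B}_*)$ is established from Assumption~\ref{assump:regularity A}, the rest of the proof is a routine assembly of the ergodic theorem, the continuous mapping theorem, and Slutsky's lemma.
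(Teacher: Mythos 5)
Your proposal is correct and follows essentially the same route as the paper, which gives no explicit proof and simply asserts that the lemma ``is obtained from the asymptotic normality of the estimators''; your argument (asymptotic normality from Proposition~\ref{prop:normality LSE B} and Theorem~\ref{thm:normality A}, consistency of $\hat{\bm{\Sigma}}_{\bm{b}}$ and of the plug-in Jacobians via smoothness of the LU map near $g(\bm{B}_*)$, then Slutsky) is exactly the intended reasoning, spelled out in full. Your caveat that $\bm{w}_l^\top \bm{\Sigma}_l \bm{w}_l > 0$ must be assumed is well taken, since the paper itself notes that $\bm{\Sigma}_1$ and $\bm{\Sigma}_2$ are singular, so the normalization is genuinely undefined for directions in their null spaces.
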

Under the null hypothesis 
$\mathcal{H}_0$, 
we have 
$
\bm{Q}_* = \bm{I}_k$, $\bm{A}_{0*} = \bm{O}$, 
and $g(\bm{B}_*) = g(\bm{A}_*)$ is a 
non-singular 
upper-triangular
matrix.
For a matrix 
$\bm{C} \in \mathbb{R}^{k\times (1+kp)}$, 
$g(\bm{C}) \in \mathbb{R}^{k \times k}$ is a sub-matrix of $\bm{C}$.
We define the sub-vectors 
$\bm{q}_{\mathrm{sub}}$ and $\bm{a}_{0 \mathrm{sub}}$
of $\bm{q}$ and $\bm{a}_0$, 
which comprise the lower-triangular components 
of 
$\bm{Q}$ and $\bm{A}_0$, except for diagonal components, respectively.
Similarly, let $\bm{\beta}_{\mathrm{sub}}$
be a sub-vector which comprises
the lower-triangular
components of 
$g(\bm{B})$, except for diagonal components. 
Then, we have 
\[
\bm{w}^\top \bm{q}_{*\mathrm{sub}}
= \bm{w}^\top \bm{a}_{0*\mathrm{sub}}
= \bm{w}^\top \bm{\beta}_{*\mathrm{sub}} =0
\]
for every vector $\bm{w} \in \mathbb{R}^{k(k-1)/2}$.
Thus, we obtain the following theorem.
\begin{thm}\label{thm:test}
Suppose that Assumptions \ref{assump:stationary}, 
\ref{assump:regularity LSE}, and \ref{assump:regularity A} hold.
For $l=1,2,3$ and 
$\bm{v} \in \mathbb{R}^{k(k-1)/2}$, 
consider the following test statistics
$z_{lT}$
\[
z_{1T} = \sqrt{T} (\bm{v}^\top \hat{\bm{\Sigma}}_{1\mathrm{sub}} \bm{v})^{-1/2} \bm{v}^\top \hat{\bm{q}}_{T\mathrm{sub}},
\]
\[
z_{2T} = \sqrt{T} (\bm{v}^\top \hat{\bm{\Sigma}}_{2\mathrm{sub}} \bm{v})^{-1/2} \bm{v}^\top \hat{\bm{a}}_{0 T\mathrm{sub}},
\]
and 
\[
z_{3T} = \sqrt{T} (\bm{v}^\top \hat{\bm{\Sigma}}_{\bm{b}\mathrm{sub}} \bm{v})^{-1/2} \bm{v}^\top \hat{\bm{\beta}}_{T\mathrm{sub}},
\]
where 
$\hat{\bm{\Sigma}}_{1\mathrm{sub}}, \hat{\bm{\Sigma}}_{2\mathrm{sub}},$ and 
$\hat{\bm{\Sigma}}_{\bm{b}\mathrm{sub}}$ are 
sub-matrices of 
$\hat{\bm{\Sigma}}_{1}, \hat{\bm{\Sigma}}_{2},$ and 
$\hat{\bm{\Sigma}}_{\bm{b}}$
corresponding to 
sub-vectors $\hat{\bm{q}}_{T\mathrm{sub}}$, 
$\hat{\bm{a}}_{0T\mathrm{sub}}$, and 
$\hat{\bm{\beta}}_{T\mathrm{sub}}$, respectively.
\begin{itemize}
\item[(i)]
Under the null hypothesis $\mathcal{H}_0$, 
\[
z_{lT} \to^d N(0, 1),\quad T \to \infty,\quad
l=1,2,3.
\]
\item[(ii)]
Assume in addition that 
$
\bm{v}_1^\top \bm{q}_{*}, 
\bm{v}_2^\top \bm{a}_{0*}, 
\bm{v}_3^\top \bm{b}_*
\neq 0
$
under the alternative hypothesis $\mathcal{H}_1$, 
then, 
\[
\Prob(|z_{lT}| > c) \to 1,\quad T \to \infty,\quad
l=1,2,3
\]
for every $c>0$.
\end{itemize}
\end{thm}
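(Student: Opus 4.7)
The plan is to reduce Theorem~\ref{thm:test} to an application of the asymptotic normality results from Proposition~\ref{prop:normality LSE B} and Theorem~\ref{thm:normality A}, specialized to the sub-vectors that vanish under $\mathcal{H}_0$. First I would record the observation already noted in the statement: when $\bm{A}_0 = \bm{O}$ we have $\bm{Q}_* = \bm{I}_k$, so all off-diagonal lower-triangular components of $\bm{Q}_*$ and $\bm{A}_{0*}$ are zero, and $g(\bm{B}_*) = g(\bm{A}_*)$ is upper-triangular so that $\bm{\beta}_{*\mathrm{sub}} = \bm{0}$ as well. Consequently, under $\mathcal{H}_0$, $\hat{\bm{q}}_{T\mathrm{sub}} = \hat{\bm{q}}_{T\mathrm{sub}} - \bm{q}_{*\mathrm{sub}}$, and similarly for $\hat{\bm{a}}_{0T\mathrm{sub}}$ and $\hat{\bm{\beta}}_{T\mathrm{sub}}$, so each test statistic becomes a studentized linear functional of a centered, asymptotically normal vector.

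For part (i), I would pass from \eqref{eq:normality q} and \eqref{eq:normality ao} to the asymptotic normality of the corresponding sub-vectors by selecting the appropriate rows of $\bm{\Sigma}_1$ and $\bm{\Sigma}_2$, obtaining limiting covariances $\bm{\Sigma}_{1\mathrm{sub}}$ and $\bm{\Sigma}_{2\mathrm{sub}}$. The vector $\hat{\bm{\beta}}_{T\mathrm{sub}}$ is itself a coordinate projection of the $k$ selected columns of $\hat{\bm{B}}_T$, hence a linear function of $\hat{\bm{b}}_T$, so Proposition~\ref{prop:normality LSE B} directly yields its asymptotic normality with covariance $\bm{\Sigma}_{\bm{b}\mathrm{sub}}$. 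For each $l=1,2,3$, writing $z_{lT}$ in the form $\sqrt{T}\bm{v}^\top(\hat{\theta}_{T\mathrm{sub}} - \theta_{*\mathrm{sub}})/(\bm{v}^\top \hat{\bm{\Sigma}}_{\cdot\mathrm{sub}}\bm{v})^{1/2}$, Slutsky's lemma applied to the studentizing denominator combined with the limit of the numerator gives the $N(0,1)$ conclusion, provided $\bm{v}^\top \bm{\Sigma}_{\cdot\mathrm{sub}}\bm{v}>0$ so the normalization is well-defined. Consistency of the variance estimators reduces to consistency of $\hat{\bm{b}}_T$ together with continuity of $f_1,f_2$ and their Jacobians at $\bm{b}_*$, already exploited in the proof of Theorem~\ref{thm:normality A}, plus the law of large numbers underlying $\hat{\bm{\Sigma}}_{\bm{b}}$.

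For part (ii), I would use the decomposition
\[
z_{lT} = \sqrt{T}(\bm{v}^\top \hat{\bm{\Sigma}}_{\cdot\mathrm{sub}}\bm{v})^{-1/2}\bm{v}^\top(\hat{\theta}_{T\mathrm{sub}} - \theta_{*\mathrm{sub}}) + \sqrt{T}(\bm{v}^\top \hat{\bm{\Sigma}}_{\cdot\mathrm{sub}}\bm{v})^{-1/2}\bm{v}^\top \theta_{*\mathrm{sub}},
\]
where $\theta$ is $\bm{q}$, $\bm{a}_0$, or (a sub-vector of) $\bm{b}$, as appropriate. Under the alternative, consistency of $\hat{\bm{\Sigma}}_{\cdot\mathrm{sub}}$ toward a positive-definite limit forces the first summand to be $O_P(1)$ by the same reasoning as in part (i), while the assumption $\bm{v}_l^\top \theta_*\neq 0$ makes the second summand grow at rate $\sqrt{T}$. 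Hence $|z_{lT}|\to\infty$ in probability and $\Prob(|z_{lT}|>c)\to 1$ for every $c>0$.

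The main technical obstacle is not the limit law itself---which is a routine delta-method/Slutsky application given Proposition~\ref{prop:normality LSE B} and Theorem~\ref{thm:normality A}---but rather verifying that the plug-in Jacobians $\hat{\bm{J}}_{lT}$ converge in probability to $\bm{J}_l$ and that $\bm{v}^\top\bm{\Sigma}_{\cdot\mathrm{sub}}\bm{v}$ is strictly positive so the studentization is well-defined. Both reduce to arguing that $\hat{\bm{b}}_T$ lies, with probability tending to one, in a neighborhood of $\bm{b}_*$ on which the map $\bm{L}_g$ (and hence $f_1,f_2$) is smoothly invertible, a fact guaranteed by Assumption~\ref{assump:regularity A} together with the consistency of $\hat{\bm{b}}_T$ from Proposition~\ref{prop:normality LSE B}.
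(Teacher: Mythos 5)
Your proposal is correct and follows essentially the same route as the paper: part (i) is the sub-vector specialization of the asymptotic normality results (the paper delegates this to Lemma \ref{lem:estimation sub-vector}), and part (ii) uses the same decomposition of $z_{lT}$ into a studentized centered term plus a $\sqrt{T}$-divergent drift term $\sqrt{T}(\bm{v}^\top\hat{\bm{\Sigma}}_{\cdot\mathrm{sub}}\bm{v})^{-1/2}\bm{v}^\top\theta_{*\mathrm{sub}}$, the paper merely phrasing the final step via indicator functions, the continuous mapping theorem, and dominated convergence rather than "$|z_{lT}|\to\infty$ in probability." Your added remarks on consistency of the plug-in Jacobians and strict positivity of $\bm{v}^\top\bm{\Sigma}_{\cdot\mathrm{sub}}\bm{v}$ address points the paper leaves implicit.
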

\begin{rem}\label{rem:asymptotic theory}
Similarly to the example of the empirical study described in the next section,
the number of observations 
of time series data that seems to satisfy stationarity tends to be small. However, SVAR models have 
at least $k^2p$ 
parameters, 
and we often analyze the data such that the $k^2p/T \not\asymp 0$. 
To handle such data, we should consider 
central limit theorems or Gaussian approximations under 
high-dimensional settings, e.g.,  
the case where $k^2p/T \to \kappa \in (0, 1)$ as 
$T \to \infty$.
See \cite{Koike2021},  
\cite{Chernozhukov2023} for the
recent development of high-dimensional 
asymptotic theory,
\cite{Fang2023} for the degenerate case, and 
\cite{Belloni2018} for martingale high-dimensional 
central limit theorems.
\end{rem}
\section{Numerical studies}\label{sec:numerical}
\subsection{Simulations}
In this section, we verify the finite-sample performances of the proposed estimators and test statistics.
Because each component of 
$f_l(\bm{b})$ and $f_{l, h}(\bm{b})$ for $l=1,\ldots,5$
can be represented as a
rational function of the components of $\bm{b}$, 
we can calculate their derivatives using, e.g., 
``PyTorch'' in Python.
In the sequel, we consider a stationary model 
with $k=p=5$.
The innovation process $\{\bm{v}_t\}_{t \in \mathbb{Z}}$ is generated by the following model:
\[
\bm{v}_t = \bm{A}_W \bm{W}_t + \bm{u}_t,
\] 
where $\{\bm{W}_t\}_{t \in \mathbb{Z}}$, 
$\{\bm{u}_t\}_{t \in \mathbb{Z}}$ are two and five-dimensional mutually uncorrelated 
i.i.d. Laplace distributed with mean $\bm{0}$
and variance $0.5$, respectively, and 
\[
\bm{A}_W = 
\begin{pmatrix}
    0.5 & -0.5 \\
    0.5 &  0.5 \\
   -0.5 &  0.5 \\
    0.4 &  0.6 \\
   -0.4 & -0.6 \\
\end{pmatrix}.
\]
The true values of the coefficient matrices are 
shown with the heat maps (Figure \ref{fig:parameter simulation}), 
where the $x$-axis of the figure for $g(\bm{A})$
represents the column number of $\bm{A}$.
\begin{figure}[H]
    \centering
    \includegraphics[scale=0.6]{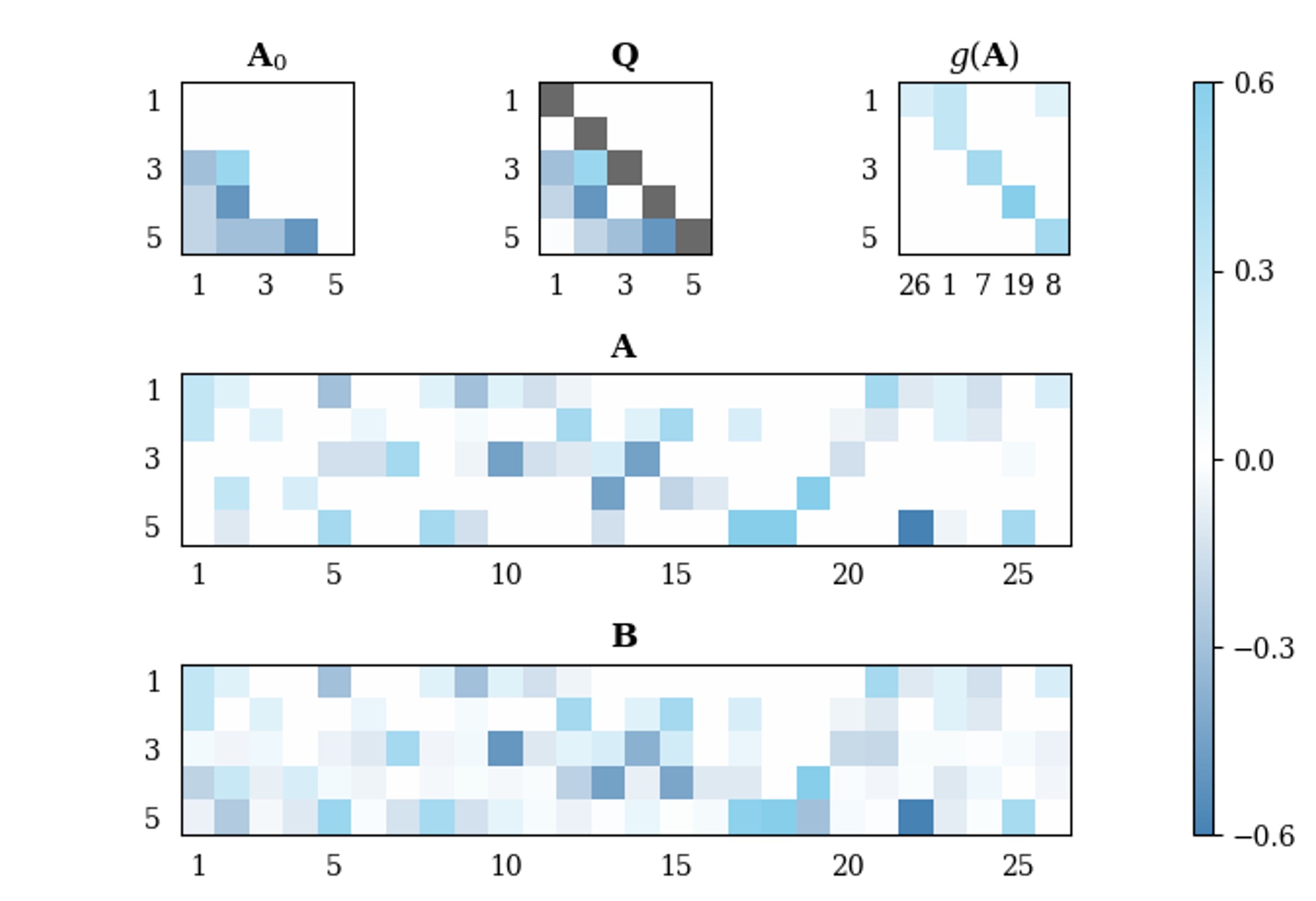}
    \caption{Heat maps of the coefficient matrices}
    \label{fig:parameter simulation}
\end{figure}
We calculate the proposed 
estimators and test statistics 
for $1000$ replications.

First, we evaluate the performances of the 
estimators $\hat{\bm{b}}_T$, $\hat{\bm{q}}_T$, 
$\hat{\bm{a}}_{0T}$, and $\hat{\bm{a}}_T$, 
using the mean of empirical bias (MB)
and mean of empirical mean absolute error (MMAE).
Tables \ref{table:est1}--\ref{table:est3}
summarize the results.
\begin{table}[H]
  \caption{Estimators of the coefficient matrices $(T=100)$}
  \label{table:est1}
  \centering
  \begin{tabular}{lcccc}
    \hline
      & $\hat{\bm{b}}_T$  &  $\hat{\bm{q}}_T$ & $\hat{\bm{a}}_{0T}$ & $\hat{\bm{a}}_T$  \\
    \hline \hline
    MB  & 0.008  & 0.062 & 0.065 & 0.026  \\
    MMAE   & 0.11 & 0.155 & 0.16 & 0.138 \\
    \hline
  \end{tabular}
\end{table}

\begin{table}[H]
  \caption{Estimators of the coefficient matrices $(T=200)$}
  \label{table:est2}
  \centering
  \begin{tabular}{lcccc}
    \hline
      & $\hat{\bm{b}}_T$  &  $\hat{\bm{q}}_T$ & $\hat{\bm{a}}_{0T}$ & $\hat{\bm{a}}_T$  \\
    \hline \hline
    MB  & 0.004  & 0.043 & 0.049 & 0.019  \\
    MMAE   & 0.072 & 0.124 & 0.127 & 0.092 \\
    \hline
  \end{tabular}
\end{table}

\begin{table}[H]
  \caption{Estimators of the coefficient matrices $(T=500)$}
  \label{table:est3}
  \centering
  \begin{tabular}{lcccc}
    \hline
      & $\hat{\bm{b}}_T$  &  $\hat{\bm{q}}_T$ & $\hat{\bm{a}}_{0T}$ & $\hat{\bm{a}}_T$  \\
    \hline \hline
    MB  & 0.002  & 0.019 & 0.019 & 0.008  \\
    MMAE   & 0.043 & 0.079 & 0.077 & 0.053 \\
    \hline
  \end{tabular}
\end{table}
We also calculate the estimators for impulse responses for $h=1, 2, 3$.
The results are summarized in 
Table \ref{table:est4}--\ref{table:est6}.
In summary, we can observe the consistency of the proposed estimators.
\begin{table}[H]
  \caption{Estimators for (controlled) total effects $(T=100)$}
  \label{table:est4}
  \centering
  \begin{tabular}{lcccccc}
    \hline
      & $\hat{\bm{\psi}}_{1T}$  &  $\hat{\bm{\psi}}_{2T}$ & $\hat{\bm{\psi}}_{3T}$ & $\hat{\bm{\psi}}_{1T}^{\mathrm{o}}$ & $\hat{\bm{\psi}}_{2T}^{\mathrm{o}}$ & $\hat{\bm{\psi}}_{3T}^{\mathrm{o}}$   \\
    \hline \hline
    MB  & 0.008  & 0.009 & 0.006 & 0.030 & 0.031 & 0.032 \\
    MMAE   & 0.118 & 0.105 & 0.102 & 0.145 & 0.138 & 0.143 \\
    \hline
  \end{tabular}
\end{table}

\begin{table}[H]
  \caption{Estimators for (controlled) total effects $(T=200)$}
  \label{table:est5}
  \centering
  \begin{tabular}{lcccccc}
    \hline
      & $\hat{\bm{\psi}}_{1T}$  &  $\hat{\bm{\psi}}_{2T}$ & $\hat{\bm{\psi}}_{3T}$ & $\hat{\bm{\psi}}_{1T}^{\mathrm{o}}$ & $\hat{\bm{\psi}}_{2T}^{\mathrm{o}}$ & $\hat{\bm{\psi}}_{3T}^{\mathrm{o}}$   \\
    \hline \hline
    MB  & 0.004  & 0.003 & 0.004 & 0.019 & 0.023 & 0.022 \\
    MMAE   & 0.076 & 0.069 & 0.066 & 0.095 & 0.094 & 0.099 \\
    \hline
  \end{tabular}
\end{table}

\begin{table}[H]
  \caption{Estimators for (controlled) total effects $(T=500)$}
  \label{table:est6}
  \centering
  \begin{tabular}{lcccccc}
    \hline
      & $\hat{\bm{\psi}}_{1T}$  &  $\hat{\bm{\psi}}_{2T}$ & $\hat{\bm{\psi}}_{3T}$ & $\hat{\bm{\psi}}_{1T}^{\mathrm{o}}$ & $\hat{\bm{\psi}}_{2T}^{\mathrm{o}}$ & $\hat{\bm{\psi}}_{3T}^{\mathrm{o}}$   \\
    \hline \hline
    MB  & 0.002  & 0.002 & 0.002 & 0.007 & 0.008 & 0.008 \\
    MMAE   & 0.047 & 0.042 & 0.040 & 0.057 & 0.055 & 0.062 \\
    \hline
  \end{tabular}
\end{table}
Next, we show the asymptotic normality
of the proposed estimators
$\hat{\bm{q}}_T, \hat{\bm{a}}_{0T}, \hat{\bm{a}}_T$, 
and $\hat{\bm{\psi}}_{h, T}$ for 
$h=1,2,3$.
To do this, we introduce the following 
random variables.
\begin{align*}
    s_{1T} &= \sqrt{T}(\bm{1}^{\top}_{k^2}\hat{\bm{\Sigma}}_{1}\bm{1}_{k^2})^{-\frac{1}{2}}\bm{1}^{\top}_{k^2}(\hat{\bm{q}}_T-\bm{q}_{*}) \\
    s_{2T} &= \sqrt{T}(\bm{1}^{\top}_{k^2}\hat{\bm{\Sigma}}_{2}\bm{1}_{k^2})^{-\frac{1}{2}}\bm{1}^{\top}_{k^2}(\hat{\bm{a}}_{0T}-\bm{a}_{0*}) \\
    s_{3T} &= \sqrt{T}(\bm{1}^{\top}_{kr}\hat{\bm{\Sigma}}_{3}\bm{1}_{kr})^{-\frac{1}{2}}\bm{1}^{\top}_{kr}(\hat{\bm{a}}_{T}-\bm{a}_{*}) \\
    s_{5,hT} &= \sqrt{T}(\bm{1}^{\top}_{k^2}\hat{\bm{\Sigma}}_{5,h}\bm{1}_{k^2})^{-\frac{1}{2}}\bm{1}^{\top}_{k^2}(\hat{\bm{\psi}}_{hT}^{\mathrm{o}}-\bm{\psi}_{h*}^{\mathrm{o}}) ,
\end{align*}
where $\bm{1}_{k^2}=(1,1,\ldots,1)^\top \in \mathbb{R}^{k^2}$ and $\bm{1}_{kr}=(1,1,\ldots,1)^\top \in \mathbb{R}^{kr}$.
Figures \ref{fig:hist1}--\ref{fig:hist6}
show the histograms of the random variables over 
1000 replications;
the $y$-axis indicates the relative frequency of the 
statistics over the replications
and the dashed line is the density function of the 
standard normal distribution.
Moreover, Tables \ref{table:size1} and 
\ref{table:size2} show the 
empirical tail probabilities of the 
random variables.
Even for the relatively small observations $T=100$, 
the estimators are well approximated by 
the normal distribution. 
\begin{figure}[H]
    \centering
    \includegraphics[scale=0.5]{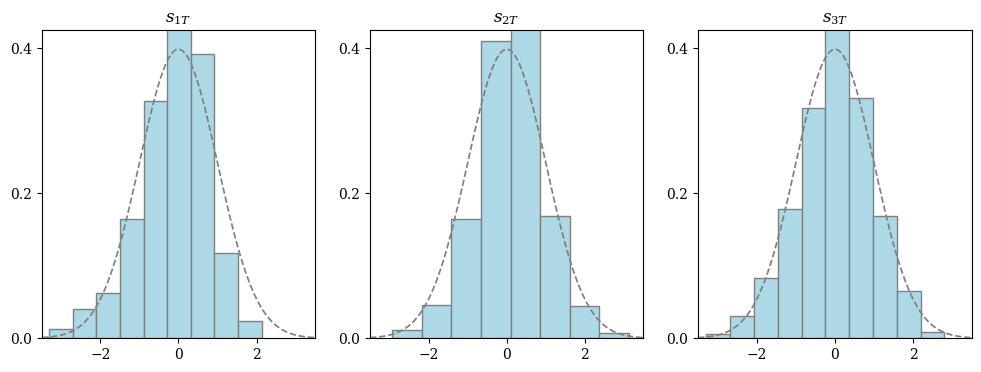}
    \caption{Histograms of $s_{iT}$ for $i=1,2,3$. $(T=100)$}
    \label{fig:hist1}
\end{figure}

\begin{figure}[H]
    \centering
    \includegraphics[scale=0.5]{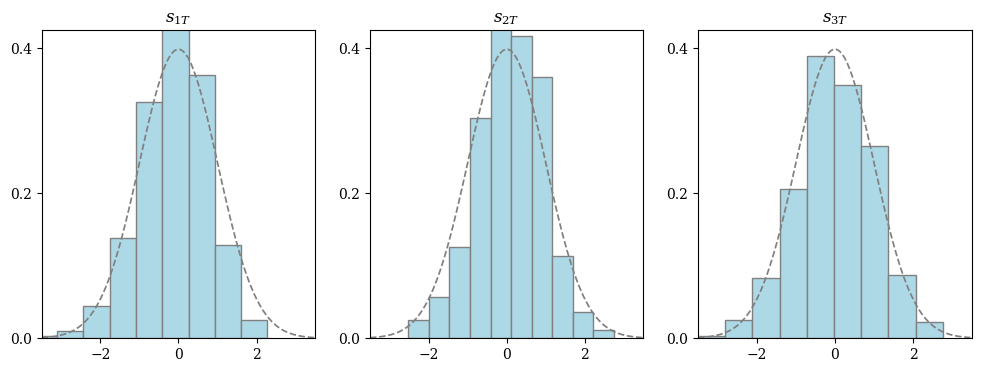}
    \caption{Histograms of $s_{iT}$ for $i=1,2,3$. $(T=200)$}
    \label{fig:hist2}
\end{figure}

\begin{figure}[H]
    \centering
    \includegraphics[scale=0.5]{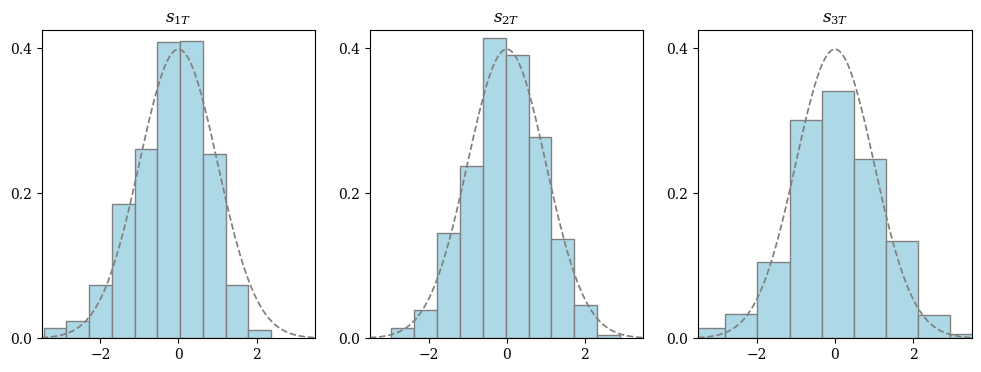}
    \caption{Histograms of $s_{iT}$ for $i=1,2,3$. $(T=500)$}
    \label{fig:hist3}
\end{figure}
\begin{table}[H]
  \caption{Empirical tail probabilities}
  \label{table:size1}
  \centering
  \begin{tabular}{lccc}
    \hline
      & $|s_{1T}|>1.96$ & $|s_{2T}|>1.96$ & $|s_{3T}|>1.96$ \\
    \hline \hline
    $T=100$  & 0.100  & 0.048 & 0.030  \\
    $T=200$  & 0.056 & 0.036 & 0.027  \\
    $T=500$  & 0.061 & 0.047 & 0.039  \\
    \hline
  \end{tabular}
\end{table}

\begin{figure}[H]
    \centering
    \includegraphics[scale=0.5]{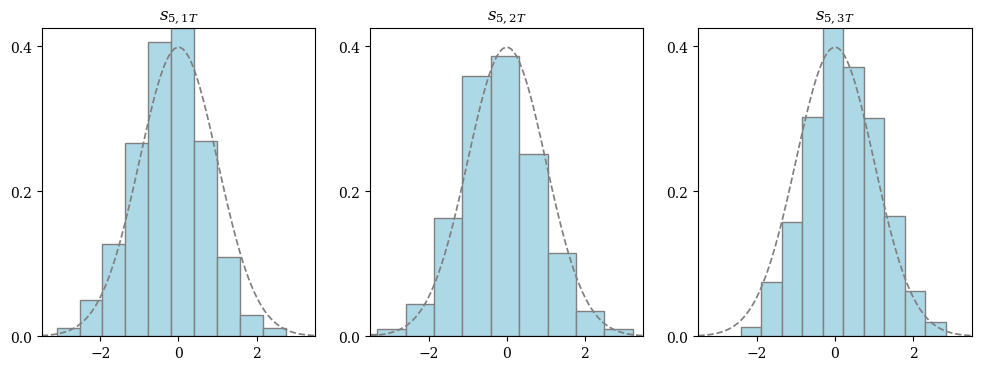}
    \caption{Histograms of $s_{5, h T}$ for $h=1,2,3$. $(T=100)$}
    \label{fig:hist4}
\end{figure}

\begin{figure}[H]
    \centering
    \includegraphics[scale=0.5]{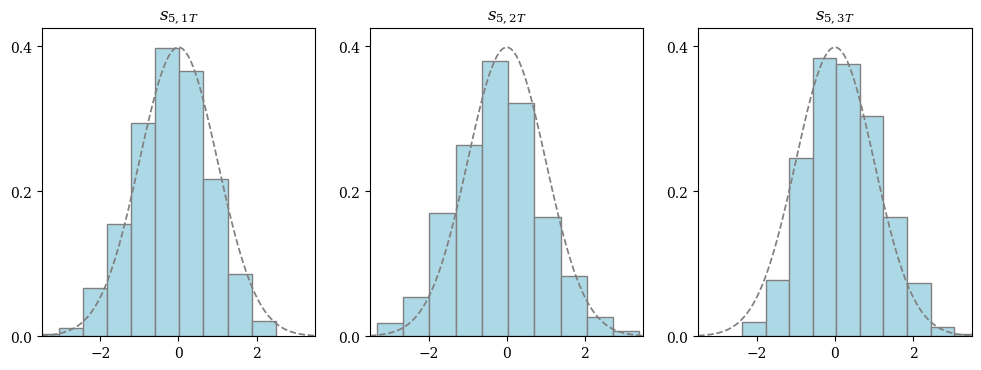}
    \caption{Histograms of $s_{5, h T}$ for $h=1,2,3$. $(T=200)$}
    \label{fig:hist5}
\end{figure}

\begin{figure}[H]
    \centering
    \includegraphics[scale=0.5]{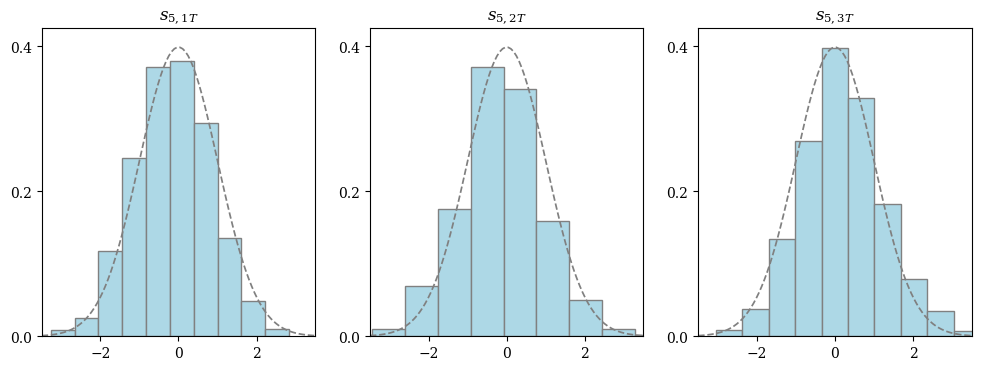}
    \caption{Histograms of $s_{5, h T}$ for $h=1,2,3$. $(T=500)$}
    \label{fig:hist6}
\end{figure}

\begin{table}[H]
  \caption{Empirical tail probabilities}
  \label{table:size2}
  \centering
  \begin{tabular}{lcccc}
    \hline
      & $|s_{5,1T}|>1.96$ & $|s_{5,2T}|>1.96$ & $|s_{5,3T}|>1.96$ \\
    \hline \hline
    $T=100$  & 0.048  & 0.056 & 0.035 \\
    $T=200$  & 0.051 & 0.075 & 0.050 \\
    $T=500$  & 0.044 & 0.078 & 0.069 \\
    \hline
  \end{tabular}
\end{table}

Finally, we check the behaviors of the test
statistics $z_{iT}, i=1,2,3$ with $\bm{v} = \bm{1}=(1,\ldots,1)^\top \in \mathbb{R}^{k(k-1)/2}$
under the alternative hypothesis 
$\mathcal{H}_1$.
Figures \ref{fig:hist alt1}--\ref{fig:hist alt3} show  
the histograms of $z_{iT}, i=1,2,3$
under $\mathcal{H}_1$ and 
the dashed line indicates the density function 
of the standard normal distribution.
Moreover, the powers of the test statistics
are summarized in Table \ref{table:power}.
The consistency of the test also seems valid.
In terms of the power, 
it may be better to use $z_{3T}$ than $z_{1T}$ and 
$z_{2T}$ for the test.
\begin{table}[H]
  \caption{Power of the test statistics}
  \label{table:power}
  \centering
  \begin{tabular}{lccc}
    \hline
      & $z_{1T}$ & $z_{2T}$ & $z_{3T}$ \\
    \hline \hline
    $T=100$  & 0.350  & 0.225 & 0.668 \\
    $T=200$  & 0.554 & 0.459 & 0.925 \\
    $T=500$  & 0.877 & 0.846 & 1.000 \\
    \hline
  \end{tabular}
\end{table}

\begin{figure}[H]
    \centering
    \includegraphics[scale=0.5]{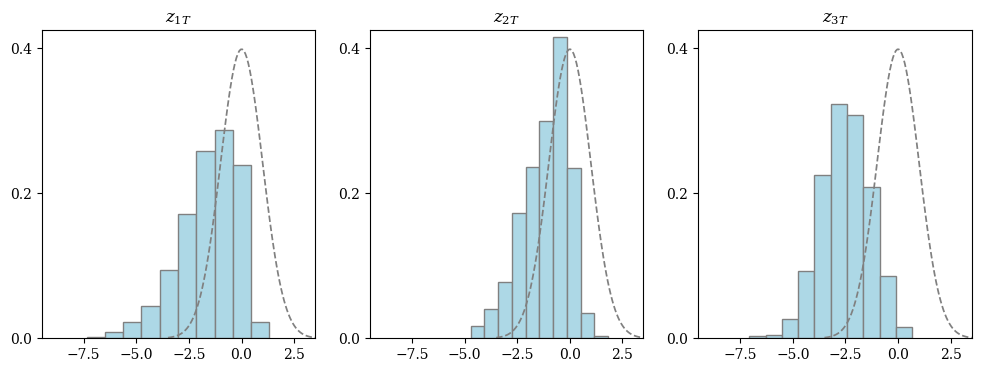}
    \caption{Distribution of test statistics under $\mathcal{H}_1$ $(T=100)$}
    \label{fig:hist alt1}
\end{figure}

\begin{figure}[H]
    \centering
    \includegraphics[scale=0.5]{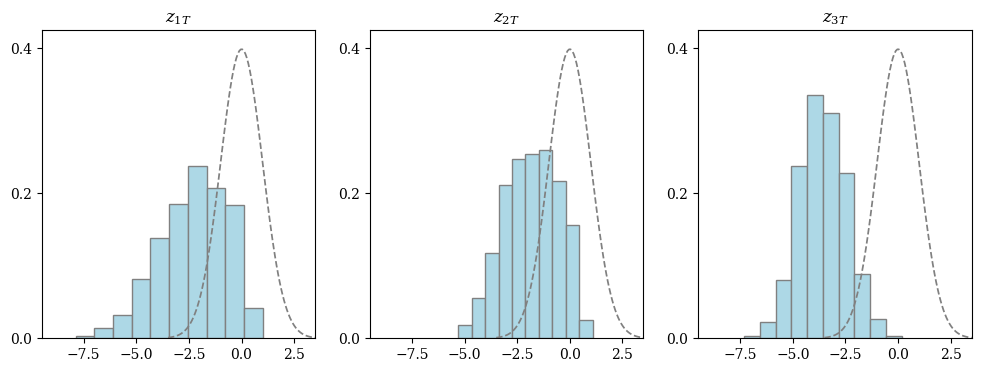}
    \caption{Distribution of the test statistics under $\mathcal{H}_1$ $(T=200)$}
    \label{fig:hist alt2}
\end{figure}

\begin{figure}[H]
    \centering
    \includegraphics[scale=0.5]{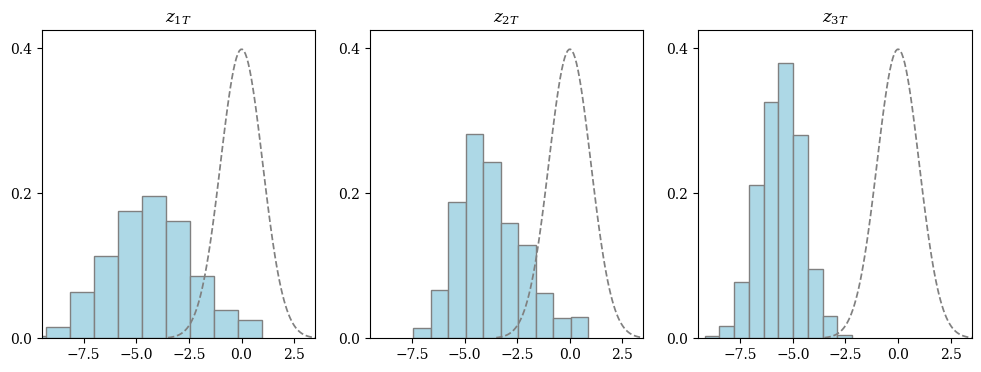}
    \caption{Distribution of the test statistics under $\mathcal{H}_1$ $(T=500)$}
    \label{fig:hist alt3}
\end{figure}
\subsection{Empirical studies}\label{subsec:real data}
We analyze the impact of policy rates on unemployment 
and prices in U.S. from 1992 to 2018, which are 
quarterly data.
We use the seasonally adjusted 
unemployment level data (U.S. Bureau of Labor Statistics, Unemployment Level [UNEMPLOY], retrieved from FRED, Federal Reserve Bank of St. Louis; https://fred.stlouisfed.org/series/UNEMPLOY, February 18, 2025), 
the seasonally adjusted core consumer price index data (U.S. Bureau of Labor Statistics, Consumer Price Index for All Urban Consumers: All Items Less Food and Energy in U.S. City Average [CPILFESL], retrieved from FRED, Federal Reserve Bank of St. Louis; https://fred.stlouisfed.org/series/CPILFESL, February 18, 2025), 
and the federal funds effective rate (Board of Governors of the Federal Reserve System (US), Federal Funds Effective Rate [DFF], retrieved from FRED, Federal Reserve Bank of St. Louis; https://fred.stlouisfed.org/series/DFF, 
February 18, 2025).
In the sequel, we modify the data and 
use the following notations.
\begin{itemize}
    \item $Y_{1,t}$ : growth rate of the unemployment level (\%).
    \item $Y_{2,t}$ : inflation rate (growth rate of the core consumer price index) (\%).
    \item $Y_{3,t}$ : difference of federal funds effective rate (\%).
\end{itemize}
Figure \ref{fig:real data} shows the plot of 
the time series $\{Y_{i, t}\}, i=1,2,3$.
\begin{figure}[H]
    \centering
    \includegraphics[scale=0.5]{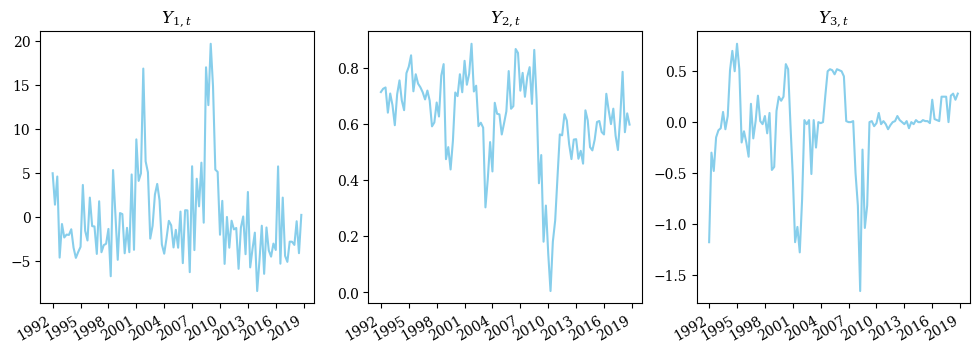}
    \caption{Plot of time series $Y_{i, t}, i=1,2,3$}
    \label{fig:real data}
\end{figure}
We apply the proposed method to the modified data.
We choose the lag $p=4$, which 
enables us to consider the correlations
between the three time series 
up to one year ago.
The estimated coefficient matrices 
of the reduced form are as follows (the number of observations is 
$T = 104$).
\begin{align*}
    \hat{\bm{B}}_{1T} &=
    \begin{pmatrix}
        0.061 & 0.634 & -3.973 \\
        -0.003 &  0.549 & -0.037 \\
        -0.017 & -0.37 & 0.651
    \end{pmatrix}, \hspace{6pt} 
    \hat{\bm{B}}_{2T} =
    \begin{pmatrix}
        0.368 & 4.763 & -1.284 \\
        -0.037 & -0.003 &  0.115 \\
        0.651 &  0.005 & 0.258
    \end{pmatrix} \\
    \hat{\bm{B}}_{3T} &=
    \begin{pmatrix}
        0.119 & 1.575 & 0.780  \\
        -0.001 &  0.201 &  0.054 \\
        -0.006 &  0.235 & -0.026
    \end{pmatrix}, \hspace{6pt} 
    \hat{\bm{B}}_{4T} =
    \begin{pmatrix}
        -0.04 & -1.956 & 1.127 \\
        -0.003 & -0.109 & -0.026\\
        0.003 & -0.311 & -0.159
    \end{pmatrix}
\end{align*}
Figure \ref{fig:total effect real data} shows the 
estimated curve of the impulse responses
and their $0.95$ confidence intervals.

\begin{figure}[H]
    \centering
    \includegraphics[scale=0.6]{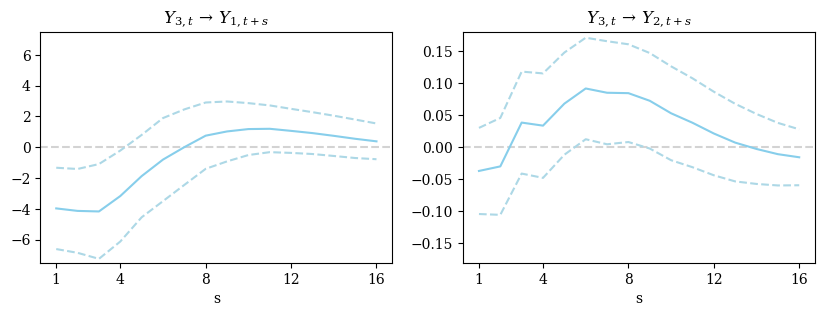}
    \caption{non-orthogonalized impulse response function}
    \label{fig:total effect real data}
\end{figure}
Note that lag $s=4$ means one year.
Since we assume that $Y_{3,t}$ does not affect $Y_{1,t}$ and $Y_{2,t}$ (this assumption derives from the ordering of the components of $\bm{Y}_{t}$), 
non-orthogonalized impulse responses in Figure \ref{fig:total effect real data} can be regarded as 'total effects' from Proposition \ref{prop:impulse response} in Appendix.
We expect that a policy rate decrease 
will cause an improvement in  
unemployment.
The upper curve of the 
significance interval in the left figure seems consistent 
with our intuition.
Meanwhile, the lower curve in the right figure 
seems to well describe the 
impact of the policy rate on the prices 
because the prices may be depressed
if the policy rate increases.
However, 
given that the monetary policy from 1992
to 2018 had often been shifted about every three years, the long-term effects shown by estimated values are more reasonable than the edges of significance intervals.
We checked the accuracy of the estimated model 
in terms of the standard deviation, root mean squared error, and adjusted R-squared coefficient,  which are summarized in  
Table \ref{table:prediction real data}.
The results indicate that 
the model is applicable to fit the data.
\begin{table}[H]
  \caption{Accuracy of the estimated model}
  \label{table:prediction real data}
  \centering
  \begin{tabular}{lccc}
    \hline
     & $\{\hat{Y}_{1,t}\}$ & $\{\hat{Y}_{2,t}\}$ & $\{\hat{Y}_{3,t}\}$  \\
    \hline \hline 
       Standard deviation & 3.557 & 0.130 & 0.306 \\
       Root mean squared error & 3.631 & 0.092 & 0.260 \\
       Adjusted R-squared & 0.416 & 0.611 & 0.552 \\
    \hline
  \end{tabular}
\end{table}

To estimate the coefficient matrices $\bm{A}_s, s=0,1,\ldots,4$ of the SVAR model, 
we consider the following assumptions
corresponding to 
Assumption \ref{assump:regularity A}.
\begin{assumption}\label{assump:real data}
\begin{itemize}
    \item[(1)] 
    The difference of the policy rate 
    $\{Y_{3, t}\}_{t}$ does not 
    depend on $(Y_{1, t-4}, Y_{2, t-4})^\top$ directly. 
    \item[(2)] 
    $Y_{3, t}$ does not affect $Y_{2, t+1}$ directly.
    \item[(3)] 
    $Y_{3, t}$ affects $Y_{3, t+1}$ directly.
    \item[(4)] 
    $Y_{1, t}$ affects $Y_{2, t+4}$ directly.
    \item[(5)]
    $Y_{3, t}$ affects $Y_{1,t+4}$ directly.
\end{itemize}
\end{assumption}
\begin{rem}
Each assumption is considered based on the 
following background.
\begin{itemize}
\item[(1)]
The policy rates are determined based on the 
behavior of the current economic indicators 
by the Federal Open Market Committee per 
about six weeks.
\item[(2)]
Prices of commodities are considered to be 
directly affected 
by production 
costs including wages of workers.
In this study, we simply assume that 
the policy rates affect prices through unemployment level.
\item[(3)]
The policy rate tends to be gradually changing.
Here, we attempt to approximate the causal structure by simplifying $Y_{3,t}$ as a direct cause of $Y_{3,t+1}$. 
\item[(4)]
The condition is a standard assumption 
because the relationship between inflation and  unemployment is widely observed.
\item[(5)]
Naturally, 
the policy rate is considered to affect
unemployment through 
various economic activities.
In this study, we simply assume that 
$Y_{3, t}$ is a direct cause of $Y_{1, t+4}$.
\end{itemize}
\end{rem}
Under Assumption \ref{assump:real data},
we consider the following 
restrictions for $\bm{A}_s, 0 \leq s \leq 4$.

\begin{align*}
    \bm{A}_0 &=
    \begin{pmatrix}
    0 & 0 & 0 \\
    * & 0 & 0 \\
    * & * & 0
    \end{pmatrix}, \hspace{6pt}
    \bm{A}_1 =
    \begin{pmatrix}
    * & * & * \\
    * & * & * \\
    * & * & \dagger
    \end{pmatrix}, \hspace{6pt}
    \bm{A}_4 =
    \begin{pmatrix}
    * & * & \dagger \\
    \dagger & * & 0 \\
    0 & 0 & 0
    \end{pmatrix},
\end{align*}
where 
$\dagger$ and $*$ mean 
nonzero parameters and arbitrary 
$\mathbb{R}$-valued parameters, respectively.
Then, the estimated coefficient 
matrices of the SVAR model 
are as follows.
\begin{align*}
    \hat{\bm{A}}_{0T} &=
    \begin{pmatrix}
        0.000 & 0.000 & 0.000 \\
        -0.023 & 0.000 & 0.000 \\
        -0.126 & 0.648 & 0.000 
    \end{pmatrix}, \hspace{6pt} 
    \hat{\bm{Q}}_{T} =
    \begin{pmatrix}
        1.000 & 0.000 & 0.000 \\
        -0.023 & 1.000 & 0.000 \\
        -0.141 & 0.648 & 1.000 
    \end{pmatrix} \\
    \hat{\bm{A}}_{1T} &=
    \begin{pmatrix}
        0.061 & 0.634 & -3.973 \\
        -0.002 & 0.564 & -0.130 \\
        -0.008 & -0.646 & 0.176
    \end{pmatrix}, \hspace{6pt} 
    \hat{\bm{A}}_{2T} =
    \begin{pmatrix}
        0.368 & 4.763 & -1.284 \\
        0.006 & 0.226 & -0.029 \\
        0.053 & 0.781 & -0.113
    \end{pmatrix} \\
    \hat{\bm{A}}_{3T} &=
    \begin{pmatrix}
        0.119 & 1.575 & 0.780 \\
        0.002 & 0.238 & 0.073 \\
        0.009 & 0.302 & 0.037
    \end{pmatrix}, \hspace{6pt} 
    \hat{\bm{A}}_{4T} =
    \begin{pmatrix}
        -0.040 & -1.956 & 1.127 \\
        -0.004 & -0.155 & 0.000 \\
        0.000 & -0.486 & 0.000
    \end{pmatrix}.
\end{align*}
Moreover, we consider the 
following test 
\[
\mathcal{H}_0: \bm{A}_0 = \bm{O},\quad
\mathcal{H}_1: \bm{A}_0 \neq \bm{O}.
\]
Based on the test statistics $z_{3T}$ defined 
in the previous section with 
$\bm{v} = \bm{1} = (1,\ldots,1)^\top \in \mathbb{R}^{k(k-1)/2}$,
the results are summarized in Table \ref{table:test real data}.
\begin{table}[H]
\caption{Test result}
  \label{table:test real data}
  \centering
  \begin{tabular}{cccc}
    \hline
     Test statistics & $T$ & $z$-value & $p$-value (two-sided test) \\
    \hline \hline
    $z_{3T}$ & 104 & -1.79 & 0.0734 \\
    \hline
  \end{tabular}
\end{table}
We cannot reject the null hypothesis 
at a significance level of $\alpha=0.05$.
As described in Proposition \ref{prop:impulse response} in Appendix, 
when $\bm{A}_0 = \bm{O}$, 
all non-orthogonalized impulse responses 
can be regarded  
not only  as partial effects, but also as ``total effects".
Therefore, if the null hypothesis
is accepted, 
we can treat non-orthogonalized impulse responses 
from not only $e_{k,t}$ but also $e_{j,t}, j=1,...,k-1$ 
as indicators of the overall effects.
However, 
given that 
we obtained a relatively small 
$p$-value for a small observations $T=104$, 
we continue our discussion 
on the estimated value of the SVAR model.
Because one of the purposes of the policy rate is 
to stabilize employment and prices,
the signs of the third row 
of $\hat{\bm{A}}_{0T}$ may be consistent 
with our intuition.
The sign of the $(2,1)$ component 
of $\hat{\bm{A}}_{0T}$ also 
seems reasonable because 
unemployment may cause a decline in 
wages, and hence, 
have a negative impact on prices. 
However, the absolute value of the 
$(2, 1)$ component
may be overestimated because 
prices are rigid.
Although we simplified the causal structure, obtaining an $\hat{\bm{A}}_{0T}$ that aligns with intuition may serve as evidence that the model can approximate the structure to some extent.

In summary, the proposed model 
seems to describe the data structure.
In particular, the non-orthogonalized 
impulse responses suggest that 
it may take about one year for 
the policy rate decrease to 
improve unemployment
and policy rates increase 
may depress the prices in the short term.
See the Appendix for the causal interpretation 
of the non-orthogonalized impulse responses.
\section{Proofs}\label{sec:proofs}
\begin{proof}[Proof of Theorem \ref{thm:normality A}]
It follows from the basic properties of the 
$\ve$ operator and the 
matrix Kronecker product that 
\begin{eqnarray*}
\bm{q}_*
&=& \ve(\bm{Q}_*) 
= \ve(\bm{L}_g(\bm{B}_*)) = f_1(\bm{b}_*) \\
\bm{a}_{0*}
&=& \ve(\bm{A}_{0*}) = \ve(\bm{I}_k - \bm{L}_g(\bm{B}_*)^{-1})
= f_2(\bm{b}_*) \\
\bm{a}_*
&=& \ve(\bm{Q}_*^{-1}\bm{B}_*) 
= \left(\bm{I}_r \otimes \bm{L}_g(\bm{B}_*)
\right)^{-1} \bm{b}_*
= f_3(\bm{b}_*).
\end{eqnarray*} 
We show that the map $\bm{L}$ is differentiable at $g(\bm{B}_*) = g(\ve^{-1}(\bm{b}_*))$.
For every matrix $\bm{C}=(c_{ij})_{1 \leq i, j \leq k}$ that admits the 
LU decomposition, the corresponding
lower-triangular matrix
$\bm{L}(\bm{C})=(l_{ij})_{1 \leq i, j \leq k}$ and 
the upper-triangular matrix
$\bm{U}(\bm{C}) = (u_{ij})_{1 \leq i, j \leq k}$
are solutions to the 
following equation:
\[
\begin{pmatrix}
c_{11} & c_{12} & c_{13} & \cdots & c_{1k} \\
c_{21} & c_{22} & c_{23} & \cdots & c_{2k} \\
\vdots & \vdots & \vdots & \ddots & \vdots \\
c_{k1} & c_{k2} & c_{k3} & \cdots &  c_{kk}
\end{pmatrix}
= \begin{pmatrix}
        1 & 0 & 0 & \cdots & 0 \\
        l_{21} & 1 & 0 & \cdots & 0 \\
        l_{31} & l_{32} & 1 & \cdots & 0 \\
        \vdots & \vdots & \vdots & \ddots & \vdots \\
        l_{k1} & l_{k2} & l_{k3} & \cdots & 1 \\
    \end{pmatrix}
\begin{pmatrix}
        u_{11} & u_{12} & u_{13} & \cdots & u_{1k} \\
        0 & u_{22} & u_{23} & \cdots & u_{2k} \\
        0 & 0 & u_{33} & \cdots & u_{3k} \\
        \vdots & \vdots & \vdots & \ddots & \vdots \\
        0 & 0 & 0 & \cdots & u_{kk} \\
    \end{pmatrix}.
\]
We can observe that for every $i, j =1,2,\ldots,k$,
$l_{ij}$ and $u_{ij}$ can be 
written as 
a rational function 
of some components of $\bm{C}$.
Under our assumptions, 
$\bm{Q}_*$ and $g(\bm{A}_*)$ are non-singular
lower and upper-triangular 
matrices, respectively, 
which implies that 
all leading principal sub-matrices 
are non-singular, therefore, 
$g(\bm{B}_*) = \bm{Q}_* g(\bm{A}_*)$ 
admits 
the LU decomposition.
Combining these facts, we obtain 
the differentiability 
of $\bm{L}$ at $g(\bm{B}_*)$.
Thus, we can apply the delta method to 
derive the conclusion.
\end{proof}
We omit the proof of 
Propositions  
\ref{prop:estimator psi}, and \ref{prop:estimator total effect} because they are direct consequences of 
the delta method.
\begin{proof}[Proof of Theorem \ref{thm:test}]
We provide the proof of the assertion (ii),
because (i) is a direct consequence of 
Lemma \ref{lem:estimation sub-vector}.

We consider the case where $l=1$.
Note that
\[
\{|z_{1T}| > c\}
=
\left\{
\left|\frac{z_{1T}}{\sqrt{T}}\right|
> \frac{c}{\sqrt{T}}
\right\}
= \{
A_T^+ > 0
\} \cup 
\{
A_T^- <0
\},
\]
where 
\[
A_T^+ = \frac{z_{1T}}{\sqrt{T}}-\frac{c}{\sqrt{T}}\quad
\mbox{and}\quad
A_T^- = \frac{z_{1T}}{\sqrt{T}}+\frac{c}{\sqrt{T}}.
\]
Noting that $\hat{\bm{\Sigma}}_1$ and 
$\hat{\bm{q}}_T$ are consistent 
estimators
for $\bm{\Sigma}_1$ and $\bm{q}_*$, respectively, 
we have 
\begin{eqnarray*}
\frac{z_{1T}}{\sqrt{T}}
&=& (\bm{v}^\top \hat{\bm{\Sigma}}_{1\mathrm{sub}} \bm{v})^{-1/2} \bm{v}^\top (\hat{\bm{q}}_{\mathrm{sub}} -\bm{q}_{*\mathrm{sub}})
+(\bm{v}^\top \hat{\bm{\Sigma}}_{1\mathrm{sub}} \bm{v})^{-1/2} \bm{v}^\top \bm{q}_{*\mathrm{sub}} \\
&=& o_p(1) + (\bm{v}^\top {\bm{\Sigma}}_{1\mathrm{sub}} \bm{v})^{-1/2} \bm{v}^\top \bm{q}_{*\mathrm{sub}},\quad
T \to \infty.
\end{eqnarray*}
Therefore, it holds that 
\[
A_T^+ = A_T^- = o_p(1) + A,
\quad T \to \infty,
\]
with 
\[
A=(\bm{v}^\top {\bm{\Sigma}}_{1\mathrm{sub}} \bm{v})^{-1/2} \bm{v}^\top \bm{q}_{*\mathrm{sub}}.
\]
By the continuous mapping theorem, we have
$1_{\{A_T^+ >0\}}
\to^p 1_{\{A >0\}}$ and 
$1_{\{A_T^- <0\}}
\to^p 1_{\{A <0\}}$ as $T \to \infty$.
Therefore, it follows from the 
dominated convergence theorem that 
\begin{eqnarray*}
\lim_{T \to \infty}\Prob(|z_{1T}| > c)
&=& \lim_{T \to \infty} \left\{
\Prob (A_T^+>0) + \Prob (A_T^-<0)
\right\} \\
&=&  \lim_{T \to \infty} \left\{
\E \left[1_{\{A_T^+>0\}}\right] + \E\left[ 1_{\{A_T^-<0\}} \right]
\right\} \\
&=& \E\left[1_{\{A^+ >0\}}\right] + \E\left[1_{\{A^- <0\}}\right]= 1,
\end{eqnarray*}
which ends the proof for $l=1$.
The assertion for $l=2,3$ can be proved similarly.
\end{proof}
\appendix
\section{Appendix: Causal interpretations of SVAR models}\label{sec:appendix}
We discuss the causal interpretations 
of SVAR models based on 
linear structural equation models 
and their causal diagrams.
\subsection{Linear structural equation models and causal diagrams}\label{subsec:causal}
As a preliminary step, we present 
the basic properties of 
linear structural equation models.
Let $\bm{V}$ be a set of a finite number of random variables.
For every $X \in \bm{V}$, let $\bm{PA}(X)$ be the parents of 
$X$ defined as follows:
\[
\bm{PA}(X) = \{Y \in \bm{V} : Y \mbox{ is a direct cause of }X\}.
\]
Then, a linear structural equation model is 
expressed as follows:
\begin{equation}\label{eq:SEM}
X_i = \sum_{X_j \in \bm{PA}(X_i)} a_{ij}X_j + u_i, 
\quad X_i \in \bm{V}.
\end{equation}
We call the constant $a_{ij} \neq 0$ a path coefficient
from $X_j$ to $X_i$.
The error term $u_i$ is a direct cause of $X_i$, which cannot be represented by the 
random variables 
in $\bm{V}$.
Then, we can consider the causal diagram studied by 
e.g., \cite{Wright1921} and \cite{Pearl2009}, 
$\bm{G} = (\bm{V}, \bm{E})$ for \eqref{eq:SEM}, where 
the vertex set $\bm{V}$ is a set of random variables and 
the set of edges $\bm{E}$ corresponds to the 
path coefficients for pairs of $\bm{V}$
and correlations among error terms. 
Similarly to Example \ref{eg:SVAR-ex}, we 
regard
correlations among error terms as the existence of common causes.
In the sequel, we introduce some definitions 
associated with linear structural equation models.
\begin{defi}\label{def:effects}
Let $\bm{G} = (\bm{V}, \bm{E})$ be a causal 
diagram for a linear structural equation model 
\eqref{eq:SEM}.
\begin{itemize}
\item[(i)]
The direct effect from $X_j$ to $X_i$ is the path coefficient $a_{ij}$, which is a directed edge from 
$X_j$ to $X_i$.
If such a directed edge does not exist, 
the direct effect is set to $0$.
\item[(ii)]
The indirect effect from $X_j$ to $X_i$ is  
the sum of the 
multiplications of the 
path coefficients corresponding to the  
directed paths except for the directed edge 
from $X_j$ to $X_i$.
\item[(iii)]
The total effect $\tau_{ij}$ from $X_j$ to $X_i$ is 
the sum of the direct and indirect effects.
\end{itemize}
\end{defi} 
See \cite{Alwin1975} for the details of the causal effects
defined above.
We further define the following controlled total 
effect.
\begin{defi}\label{def:controlled effects}
Let $\bm{G} = (\bm{V}, \bm{E})$ be a causal 
diagram for the linear structural equation model 
\eqref{eq:SEM}.
For a vertex set $\bm{S} \subset \bm{V}$ such as 
$\bm{S} \cap \{X_i, X_j\} = \emptyset$,
consider the subgraph $\bm{G}'$  
constructed by removing the edges whose terminals are vertices of $\bm{S}$.
Then, the $\bm{S}$-controlled total effect 
$\tau_{ij|do(\bm{S})}$ means the total effect
from $X_j$ to $X_i$ restricted to the subgraph $\bm{G}'$.
\end{defi}
The term ``controlled'' means ``atomic intervention'' 
(see \cite{Pearl1995} for the detail).
The following lemma provides the different representations of the 
linear structural equation model \eqref{eq:SEM}.
\begin{lem}\label{lem:ancestor expansion}
For every $X_i \in \bm{V}$, let $\bm{AN}(X_i)$
be the ancestor set of $X_i$ defined as follows:
\[
\bm{AN}(X_i)
= \{Y \in \bm{V} : \mbox{There exists at least one directed path from }Y\mbox{ to }X_i\}.
\] 
For the causal diagram $\bm{G}$ associated with 
model \eqref{eq:SEM}, 
let $\tilde{\bm{G}}$ be a subgraph of $\bm{G}$
constructed by removing bidirectional 
arrows
corresponding to the correlations between 
error terms.
Suppose that $\tilde{\bm{G}}$ is a 
directed acyclic graph (DAG).
For every $\bm{S} \subset \bm{V}$ such that 
$X_i \not \in \bm{S}$, it holds that 
\begin{equation}\label{eq:ancestor expansion}
X_i = \sum_{X_j \in \bm{S}} \tau_{ij \, | \, do(\bm{S}_{-j}) } X_j + \sum_{X_j \in \bm{AN}(X_i) \setminus \bm{S}} \tau_{ij|do(\bm{S})}u_j + u_i,
\end{equation}
where $\bm{S}_{-j} = \bm{S} \setminus \{X_j\}$.
\end{lem}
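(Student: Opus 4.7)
The plan is to prove the identity by iterated substitution of structural equations, starting from $X_i$, and to identify the accumulated coefficients with controlled total effects by a combinatorial path-counting argument. Because $\tilde{\bm{G}}$ is a DAG, we can fix a topological ordering of $\bm{V}$, and induct on the depth of $X_i$ relative to the ``frontier'' $\bm{S}$.

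The first step is the base case. If every parent of $X_i$ lies in $\bm{S} \cup \{u_j : X_j \in \bm{V}\}$, then the original structural equation \eqref{eq:SEM} already has the desired form, since for $X_j \in \bm{PA}(X_i) \cap \bm{S}$ the path coefficient $a_{ij}$ coincides with $\tau_{ij \mid do(\bm{S}_{-j})}$ (the single edge is the only directed path from $X_j$ to $X_i$ in the subgraph where incoming edges of $\bm{S}_{-j}$ are removed), and similarly $a_{ij} = \tau_{ij \mid do(\bm{S})}$ for the error terms' corresponding ancestors.

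For the inductive step, pick a parent $X_m \in \bm{PA}(X_i) \setminus \bm{S}$ (if none exist we are in the base case). By the induction hypothesis applied to $X_m$, which lies strictly below $X_i$ in the topological order,
\[
X_m = \sum_{X_j \in \bm{S}} \tau_{mj \mid do(\bm{S}_{-j})} X_j + \sum_{X_j \in \bm{AN}(X_m) \setminus \bm{S}} \tau_{mj \mid do(\bm{S})} u_j + u_m.
\]
Substituting this into $X_i = \sum_{X_m \in \bm{PA}(X_i)} a_{im} X_m + u_i$ and collecting terms, the coefficient of $X_j$ (for $X_j \in \bm{S}$) becomes
\[
\sum_{X_m \in \bm{PA}(X_i) \cap \bm{S}, \, X_m = X_j} a_{ij} \;+\; \sum_{X_m \in \bm{PA}(X_i) \setminus \bm{S}} a_{im}\, \tau_{mj \mid do(\bm{S}_{-j})}.
\]
By the path-sum definition of the controlled total effect (direct edge from $X_j$ to $X_i$ plus all directed paths from $X_j$ through vertices outside $\bm{S}_{-j}$), this sum equals $\tau_{ij \mid do(\bm{S}_{-j})}$. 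The analogous computation for the coefficient of $u_j$ gives $\tau_{ij \mid do(\bm{S})}$, since the relevant paths are precisely those directed paths from $u_j$ (equivalently, entering at $X_j$) to $X_i$ that do not traverse any vertex in $\bm{S}$. The index set $\bm{AN}(X_i) \setminus \bm{S}$ appearing in the conclusion is verified by noting that $\bm{AN}(X_m) \cup \bm{PA}(X_i) \subset \bm{AN}(X_i)$ and that any $X_j \in \bm{AN}(X_i) \setminus \bm{S}$ lies on some directed path to $X_i$ avoiding $\bm{S}$, and thus enters the sum.

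The main obstacle is the bookkeeping in the second paragraph: one must verify carefully that the recursion on path coefficients reproduces exactly $\tau_{ij \mid do(\bm{S}_{-j})}$ and $\tau_{ij \mid do(\bm{S})}$, with particular attention to the subtle distinction between removing incoming edges of $\bm{S}_{-j}$ versus $\bm{S}$ for the two cases. A clean way to handle this is to define the path-sum representation $\tau_{ij \mid do(\bm{T})} = \sum_{\pi} \prod_{\text{edges in }\pi} a_{\cdot \cdot}$, where $\pi$ ranges over directed paths from $X_j$ to $X_i$ whose internal vertices lie outside $\bm{T}$, and to observe that any such path decomposes as a single edge $X_j \to X_i$ or as an edge $X_j \to X_m$ followed by a path from $X_m$ to $X_i$ of the same type; this decomposition is exactly what the inductive substitution enacts.
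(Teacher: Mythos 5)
Your proof is correct and takes essentially the same route as the paper's: fix a topological ordering of the DAG and sequentially substitute the structural equations, identifying the accumulated coefficients with sums of products of path coefficients over directed paths whose internal vertices avoid the controlled set (your version just makes the induction and the coefficient recursion explicit, which the paper leaves as ``sequential substitution yields the conclusion''). The only blemish is cosmetic: the substitution you perform realizes the \emph{last-edge} decomposition of paths (a path from $X_j$ to a parent $X_m$ of $X_i$ followed by the edge $X_m \to X_i$), whereas your closing sentence describes the first-edge decomposition; your displayed coefficient formula is nevertheless the correct one.
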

\begin{proof}
Noting  that $\tilde{\bm{G}}$ is 
a DAG, we can 
assume without loss of generality, 
that for any indices $i <j$ of 
$\bm{V} = \{X_1,\ldots,X_N\}$, 
$X_j$ is not an ancestor of $X_i$.
Thus, we have 
\begin{equation}\label{eq:proof ancestor1}
X_i = \sum_{1 \leq i < j} a_{ij}X_j + v_i,
\end{equation}
where $a_{ij}$ represents the direct effect from $X_j$ to $X_i$ here, to discuss the general case. 
Substituting the model equation for the 
parent $X_l$ of $X_i$ in the summand of the right-hand side of \eqref{eq:proof ancestor1}, 
we have 
\begin{equation}\label{eq:proof ancestor2}
X_i = \sum_{1 \leq j < i, j \neq l} a_{ij} X_j 
+ a_{il} \left(\sum_{1 \leq j <l} a_{lj} X_j + v_l\right)
+ v_i.
\end{equation}
If $a_{il} a_{lj} \neq 0$, 
this coefficient corresponds to 
the directed path $X_j \to X_l \to X_i$.
Because $\bm{V}$ is finite and 
$\tilde{\bm{G}}$ is a DAG, 
sequential substitution yields the
conclusion.
\end{proof}
We call equality 
\eqref{eq:ancestor expansion} 
the ancestor expansion of $X_i$.
Then, we summarize the validity of the total effect and 
the controlled total effect as indicators of causal relationships.
It follows from Lemma \ref{lem:ancestor expansion}
with $\bm{S} = \{X_j\}$ that 
\[
X_i = \tau_{ij} X_j + \sum_{X_j \in \bm{AN}(X_i) \setminus \{X_j\}} \tau_{ij|do(\{X_j\})}u_j + u_i.
\]
Since 
we regard correlations among error terms as the existence of common causes, $X_j$ cannot affect $X_i$ through error terms.
Therefore, 
$\tau_{ij}$ can be interpreted as the
effect of $X_i$ caused by $X_j$.
More generally, we have 
\[
X_i = \sum_{X_j \in \bm{S}} \tau_{ij| do(\bm{S}_{-j}) } X_j + \sum_{X_j \in \bm{AN}(X_i) \setminus \bm{S}} \tau_{ij |do(\bm{S})}u_j + u_i,
\]
which implies that 
$\tau_{ij| do(\bm{S}_{-j}) }$ is the effect for 
$X_i$ caused by $X_j$s 
under the atomic intervention of all variables in 
$\bm{S}_{-j}$.
In other words, $\tau_{ij | do(\bm{S}_{-j})}$ means the sum of effects from $X_j$ to $X_i$ that does not go through components of $\bm{S}_{-j}$.
See \cite{Pearl1998} for the details of such 
interpretations.

We provide a simple example.
\begin{eg}\label{eg:SEM-1}
Let $\bm{V} = \{X_1, X_2, X_3, X_4\}$, 
and $u_1, u_2, u_3, u_4$ be possibly correlated error terms.
Then, we consider the following structural equation model:
\begin{equation}\label{eq:SEM-1}
\begin{cases}
    X_1 &= u_1 \\
    X_2 &= a_{21}X_1 + u_2 \\
    X_3 &= a_{31}X_1 + a_{32}X_2 + u_3 \\
    X_4 &= a_{42}X_2 + a_{43}X_3 + u_4.
\end{cases}
\end{equation}
The causal diagram for \eqref{eq:SEM-1}
is shown in Figure \ref{fig:SEM-1}.
\begin{figure}[H]
    \centering
    \includegraphics[scale=0.5]{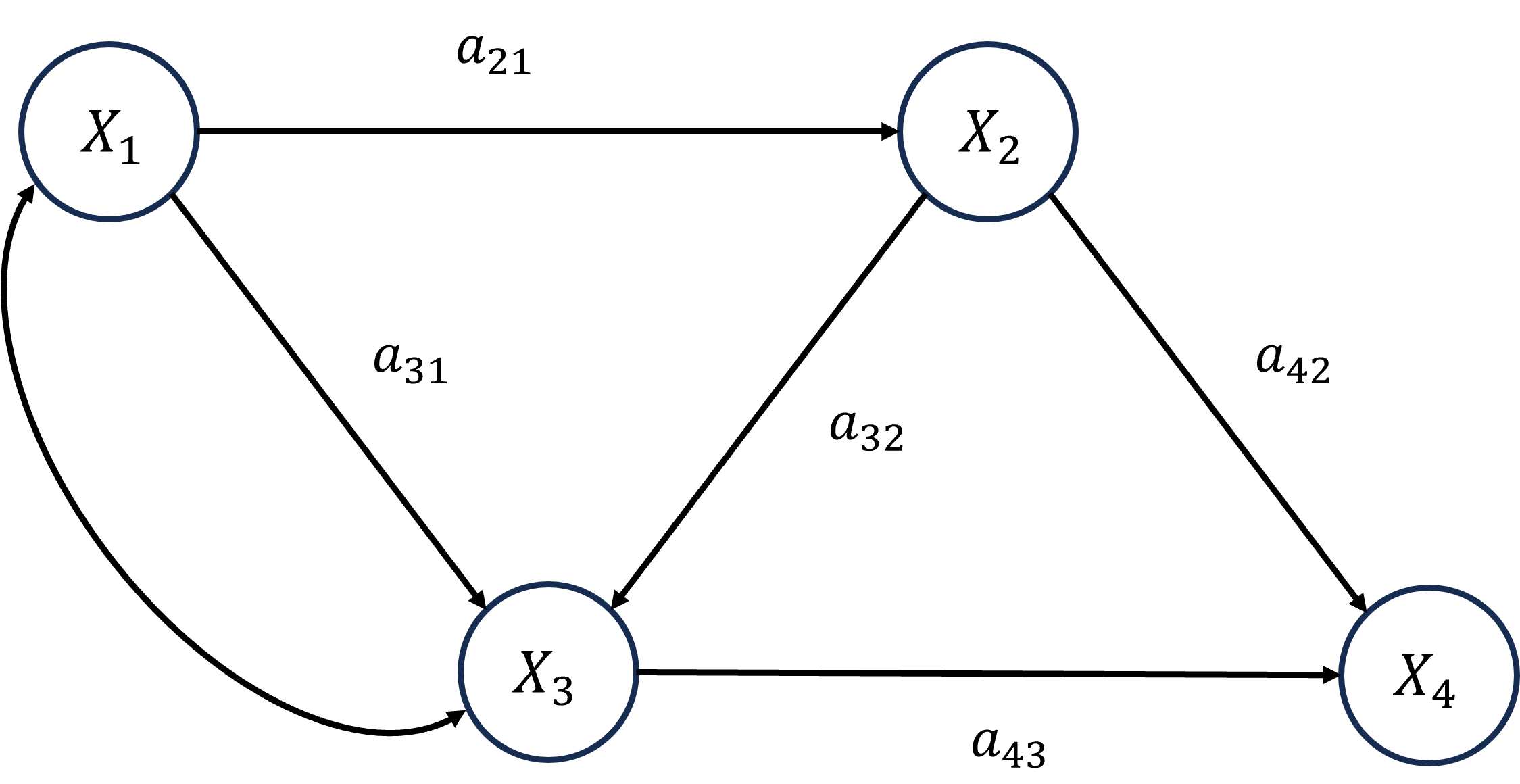}
    \caption{Causal diagram for \eqref{eq:SEM-1}}
    \label{fig:SEM-1}
\end{figure}
The bidirectional arrow indicates that 
$u_1$ and $u_3$ are correlated.
Note that this diagram is a DAG 
when we ignore the bidirectional arrow or, 
when $u_1,\ldots,u_4$
are mutually uncorrelated.
For instance, we have 
\begin{eqnarray*}
\tau_{31} &=& a_{31} + a_{32}a_{21}, \\
\tau_{41} &=& a_{42}a_{21} + a_{43}a_{32}a_{21} + a_{43}a_{31}, \\
\tau_{41\,|\,do(\{X_3\}) } &=& a_{42}a_{21}.
\end{eqnarray*}
It follows from Lemma \ref{lem:ancestor expansion} that 
\begin{eqnarray*}
    X_4 &=& a_{42}X_2 + a_{43}X_3 + u_4 \\
    &=& a_{42}(a_{21}X_1 + u_2) + a_{43}X_3 + u_4 \\
    &=& a_{42}a_{21}X_1 + a_{43}X_3 + a_{42}u_2 + u_4 \\
    &=& \tau_{42 \, | \, do(\bm{S}_{-1})}X_1 + \tau_{42 |  do(\bm{S}_{-3})}X_3 + \tau_{42 | do(\bm{S})}u_2 + u_4,
\end{eqnarray*}
where $\bm{S} = \{X_1, X_3\}$.
\end{eg}
\subsection{Causal interpretations of SVAR models}\label{subsec:causal SVAR}
Notably, we have 
\begin{equation}\label{eq:SVAR-reduced componentwise}
Y_{i,t} = \sum_{j<i}^{k}a^{(0)}_{ij}Y_{j,t} + \sum_{s=1}^{p}\sum_{j=1}^{k}a^{(s)}_{ij}Y_{j,t-s} + v_{i,t},\quad
t \in \mathbb{Z},
\end{equation}
where for $s=0,\ldots,p$, 
$a^{(s)}_{ij}$ is the $(i,j)$-th component 
of $\bm{A}_s$.
This form \eqref{eq:SVAR-reduced componentwise} is the special case of the linear structural equation models 
\eqref{eq:SEM} and 
the coefficients $a^{(s)}_{ij}, s=0,\ldots,p, i, j =1,\ldots,k$ can be regarded as direct effects.
The existence of the intercept $\bm{\mu}$ does not change the following interpretations, 
thus we omit it here for simplicity.
Because the ``error terms'' of the linear structural 
equation model mean the direct causes 
that cannot be represented by variables in $\bm{V}$,
they depend on the choice of $\bm{V}$.
Therefore, they are not necessarily 
equivalent to the innovation processes 
of  SVAR models.
For example, 
if we consider the case where 
$\bm{V} = \{Y_{i, t} : i=1,\ldots,k\}$ for fixed $t$, 
then the error term is 
\[
\sum_{j=1}^{k}a^{(s)}_{ij}Y_{j,t-s} + v_{i,t},
\]
which is not equal to 
the innovation process $v_{i, t}$.
However, in the sequel, we consider a 
sufficiently large $\bm{V}$
so that a finite number of error terms of $\{Y_{i,t}\} \subset \bm{V}$ that play a key role coincide with the innovation process of the SVAR model.

Noticing that the reduced form 
\[
\bm{Y}_t = \sum_{s=1}^{p} \bm{Q} \bm{A}_{s} \bm{Y}_{t-s} + \bm{Q}\bm{v}_t,\quad
\bm{Q} = (\bm{I}_k - \bm{A}_0)^{-1}
\]
corresponds to the ancestor expansion
of \eqref{eq:SVAR} for 
\[
\bm{S} = \{Y_{j, t-s} : j=1,\ldots,k,s=1,\ldots,p\},
\]
we have the following proposition.
\begin{prop}\label{prop:ancestor SVAR}
Suppose that 
Assumption \ref{assump:stationary}-(i) holds.
\begin{itemize}
\item[(i)]
For model \eqref{eq:SVAR-reduced},
it holds that 
the $(i, j)$-component of $\bm{Q}\bm{A}_s$ 
coincides with $\bm{S}_{-(j, s)}$-controlled total effect
from $Y_{j, t-s}$ to $Y_{i,t}$, where 
\[
\bm{S}_{-(j, s)}
= \{Y_{q, t-l} : q=1,2,\ldots,k,l=1,\ldots,p\}\setminus \{Y_{j, t-s}\}.
\]
\item[(ii)]
For model \eqref{eq:SVAR-reduced} and $i \neq j$,
the total effect from $Y_{j, t}$ to $Y_{i, t}$
coincides with 
the $(i, j)$-th component of $\bm{Q}$.
\end{itemize}
\end{prop}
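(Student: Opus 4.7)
The plan is to view the componentwise reduced form \eqref{eq:SVAR-reduced componentwise} as a special case of the linear structural equation model \eqref{eq:SEM}, whose direct effects are the coefficients $a^{(s)}_{ij}$. Under Assumption \ref{assump:stationary}-(i), $\bm{A}_0$ is strictly lower triangular with zero diagonal, so the contemporaneous causal arrows among $\{Y_{1,t},\ldots,Y_{k,t}\}$ have no cycles, and combined with the forward flow of time between lags this guarantees that the subgraph $\tilde{\bm{G}}$ obtained from $\bm{G}$ by removing the bidirectional edges that encode correlations among the components of $\bm{v}_t$ is a DAG. This is precisely the hypothesis required for Lemma \ref{lem:ancestor expansion}.

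For part (i), I would apply the lemma to $Y_{i,t}$ with
$$\bm{S} = \{Y_{q,t-l} : q = 1, \ldots, k, \, l = 1, \ldots, p\},$$
obtaining the expansion
$$Y_{i,t} = \sum_{l=1}^{p}\sum_{q=1}^{k} \tau_{i,t;\, q,t-l \,|\, do(\bm{S}_{-(q,l)})}\, Y_{q,t-l} + R_{i,t},$$
where $R_{i,t}$ is a linear combination of error terms whose indices lie in $\bm{AN}(Y_{i,t})\setminus\bm{S}$. On the other hand, the reduced form \eqref{eq:SVAR-reduced} reads
$$Y_{i,t} = \sum_{l=1}^{p}\sum_{q=1}^{k} [\bm{Q}\bm{A}_l]_{iq}\, Y_{q,t-l} + [\bm{Q}\bm{v}_t]_i.$$
Because $[\bm{Q}\bm{v}_t]_i$ depends only on $\{v_{q,t}\}_{q=1}^{k}$, which are also outside $\bm{S}$, matching the coefficients of each $Y_{q,t-l}\in\bm{S}$ between the two expansions yields $[\bm{Q}\bm{A}_l]_{iq} = \tau_{i,t;\,q,t-l\,|\,do(\bm{S}_{-(q,l)})}$, which is (i).

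For part (ii), I would observe that every directed path in $\tilde{\bm{G}}$ from $Y_{j,t}$ to $Y_{i,t}$ has all vertices at time index $t$, because the time coordinate is non-increasing along each directed edge and the path must return to time $t$ at the terminal vertex. Hence each such path has the form $Y_{j,t}\to Y_{l_1,t}\to\cdots\to Y_{l_{m-1},t}\to Y_{i,t}$ and contributes $[\bm{A}_0^m]_{ij}$ to the total effect. Since $\bm{A}_0$ is strictly lower triangular and hence nilpotent, summing over $m\geq 1$ gives
$$\tau_{i,t;\,j,t} = \sum_{m=1}^{\infty}[\bm{A}_0^m]_{ij} = \bigl[(\bm{I}_k - \bm{A}_0)^{-1} - \bm{I}_k\bigr]_{ij} = Q_{ij} - \delta_{ij} = Q_{ij},$$
the last equality using $i\neq j$.

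The main obstacle is that Lemma \ref{lem:ancestor expansion} is formally stated for a finite vertex set $\bm{V}$, whereas the SVAR causal diagram has countably infinitely many vertices (one per time index). This can be handled by restricting attention to a finite subgraph containing $\bm{AN}(Y_{i,t})$ up to the relevant lag horizon, or equivalently by noting that once $\bm{S}$ is chosen to exhaust the lagged variables the recursive substitution used in the lemma's proof terminates in finitely many steps. A secondary point worth stating explicitly is that the bidirectional arrows encoding the correlations among the components of $\bm{v}_t$ do not belong to any directed path and therefore do not enter any of the (controlled) total effects identified above, so the conclusion is unaffected by the relaxation of the diagonal-variance assumption on the innovation process.
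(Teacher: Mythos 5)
Your proof of part (i) is essentially the paper's: both invoke Lemma \ref{lem:ancestor expansion} with $\bm{S}$ equal to the full set of lagged variables $\{Y_{q,t-l}: q=1,\ldots,k,\ l=1,\ldots,p\}$ and identify the resulting coefficients of $Y_{q,t-l}$ with those of the reduced form \eqref{eq:SVAR-reduced}; you additionally spell out why $\tilde{\bm{G}}$ is a DAG under Assumption \ref{assump:stationary}-(i) and how to reconcile the lemma's finite vertex set with the infinite time-index set, details the paper leaves implicit. For part (ii) you take a genuinely different route. The paper stays inside the same ancestor expansion: it reads off $Q_{ij}$ as the coefficient of $v_{j,t}$, which by the lemma is the controlled total effect $\tau^{(0)}_{ij|do(\bm{S})}$, and then notes that intervening on lagged variables cannot block a contemporaneous directed path, so $\tau^{(0)}_{ij|do(\bm{S})}=\tau^{(0)}_{ij}$. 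You instead compute the total effect directly from Definition \ref{def:effects} as a path sum: every directed path from $Y_{j,t}$ to $Y_{i,t}$ lives entirely at time $t$, paths of length $m$ contribute $[\bm{A}_0^m]_{ij}$, and nilpotency of the strictly lower-triangular $\bm{A}_0$ collapses the sum to $[(\bm{I}_k-\bm{A}_0)^{-1}-\bm{I}_k]_{ij}=Q_{ij}$ for $i\neq j$. Your computation is more self-contained (it never passes through the error-term coefficients of the expansion), while the paper's argument reuses the lemma uniformly for both parts; the two are equivalent. One trivial slip: along a directed edge from cause to effect the time coordinate is non-\emph{decreasing}, not non-increasing, though either monotonicity forces a $t$-to-$t$ path to remain at time $t$, so nothing breaks.
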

\begin{proof}
\begin{itemize}
\item[(i)]
Because $\bm{I}_k - \bm{A}_0$ is non-singular, 
we obtain the reduced form \eqref{eq:SVAR-reduced} from \eqref{eq:SVAR}.
We obtain the 
conclusion by applying Lemma \ref{lem:ancestor expansion} for 
$\bm{S} = \{Y_{j, t-s} : j=1,2,\ldots,k, s=1,2,\ldots,p\}$.
\item[(ii)]
For $\bm{S} = \{Y_{j, t-s} : j=1,2,\ldots,k, s=1,2,\ldots,p\}$, 
it holds that
$\tau_{ij | do(\bm{S})}^{(0)} = \tau_{ij}^{(0)}$
% Add.
, where the subscript (0) in the upper right represents the time difference between the cause $Y_{j,t-0}$ and the effect $Y_{i,t}$.
Because the matrix $\bm{Q}$ is the coefficient 
of $\bm{v}_t$ and the reduced form \eqref{eq:SVAR-reduced}
coincides with the ancestor expansion of $Y_{i, t}$
for 
$\bm{S}$, we have the conclusion.
\end{itemize}
\end{proof}
\begin{cor}\label{cor:ancestor SVAR}
Let $\{\bm{Y}_t\}_{t \in \mathbb{Z}}$ be 
a time series satisfying \eqref{eq:SVAR-reduced}.
For every permutation matrix 
$\bm{P} \in \mathbb{R}^{k \times k}$, 
define $\tilde{\bm{Y}}_t = \bm{P} \bm{Y}_t$.
Then, the $(i, j)$ component of $\bm{P} \bm{Q} \bm{A}_s \bm{P}^\top$
coincides with the $\tilde{\bm{S}}_{-(j, s)}$-controlled
total effect from $\tilde{\bm{Y}}_{t-s}$ to $\tilde{\bm{Y}}_t$, 
where 
\[
\tilde{\bm{S}}_{-(j, s)}
= \{
\tilde{Y}_{l, t-s} : 
l=1,2,\ldots,k, s = 1,2,\ldots,p
\} \setminus \{\tilde{Y}_{j, t-s}\}.
\]
In addition, the $(i, j)$ component of $\bm{P} \bm{Q} \bm{P}^\top$ for every $i \neq j$ coincides with the total effect
from $\tilde{Y}_{j, t}$ to $\tilde{Y}_{i, t}$.
\end{cor}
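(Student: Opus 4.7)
The plan is to reduce the corollary to Proposition \ref{prop:ancestor SVAR} by regarding the permutation $\bm{P}$ as a relabeling of the components of $\bm{Y}_t$, under which the causal diagram is merely renamed but not structurally altered. Writing $\bm{P}$ as $(P_{ij}) = (\delta_{j,\pi(i)})$ for a permutation $\pi$ of $\{1,\ldots,k\}$, we have $\tilde{Y}_{i,t} = Y_{\pi(i),t}$, and for any $k \times k$ matrix $\bm{M}$ the identity $(\bm{P}\bm{M}\bm{P}^\top)_{ij} = M_{\pi(i),\pi(j)}$ is immediate. In particular, $(\bm{P}\bm{Q}\bm{A}_s\bm{P}^\top)_{ij} = (\bm{Q}\bm{A}_s)_{\pi(i),\pi(j)}$ and $(\bm{P}\bm{Q}\bm{P}^\top)_{ij} = Q_{\pi(i),\pi(j)}$.

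The next step is to invoke Proposition \ref{prop:ancestor SVAR} applied to the original process $\{\bm{Y}_t\}$: part (i) identifies $(\bm{Q}\bm{A}_s)_{\pi(i),\pi(j)}$ with the $\bm{S}_{-(\pi(j),s)}$-controlled total effect from $Y_{\pi(j),t-s}$ to $Y_{\pi(i),t}$, while part (ii) identifies $Q_{\pi(i),\pi(j)}$ (for $i \neq j$) with the total effect from $Y_{\pi(j),t}$ to $Y_{\pi(i),t}$. Rewriting via the relabeling, $Y_{\pi(j),t-s} = \tilde{Y}_{j,t-s}$, $Y_{\pi(i),t} = \tilde{Y}_{i,t}$, and as a set of random variables $\bm{S}_{-(\pi(j),s)}$ coincides with $\tilde{\bm{S}}_{-(j,s)}$. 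Since total effects and controlled total effects (Definitions \ref{def:effects} and \ref{def:controlled effects}) are defined purely in terms of directed paths and path coefficients, they are invariants of the causal diagram and therefore unchanged when vertices are merely relabeled; hence the quantities on the two sides match.

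The main obstacle to avoid is the temptation to apply Proposition \ref{prop:ancestor SVAR} directly to the permuted process. Although $\{\tilde{\bm{Y}}_t\}$ does satisfy $\tilde{\bm{Y}}_t = \bm{P}\bm{A}_0\bm{P}^\top \tilde{\bm{Y}}_t + \sum_{s=1}^p \bm{P}\bm{A}_s\bm{P}^\top \tilde{\bm{Y}}_{t-s} + \bm{P}\bm{v}_t$, the matrix $\bm{P}\bm{A}_0\bm{P}^\top$ is in general not lower-triangular, so Assumption \ref{assump:stationary}-(i) fails for the permuted process and the proposition cannot be invoked on it verbatim. The relabeling viewpoint above bypasses this: what the proof of Proposition \ref{prop:ancestor SVAR} really requires, via Lemma \ref{lem:ancestor expansion}, is that the causal diagram (with bidirected arrows removed) be a DAG, and this DAG property is preserved under any vertex permutation $\pi$, so the conclusions transfer cleanly to $\{\tilde{\bm{Y}}_t\}$ without need of a triangular order.
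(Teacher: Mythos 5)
Your proof is correct and follows essentially the same route as the paper: both reduce the corollary to Proposition \ref{prop:ancestor SVAR} by observing that conjugation by $\bm{P}$ amounts to a relabeling of the components, which leaves the causal diagram (and hence all total and controlled total effects) unchanged. Your additional remark that one must transfer the conclusion through the relabeling rather than apply the proposition verbatim to the permuted process (since $\bm{P}\bm{A}_0\bm{P}^\top$ need not be lower-triangular) is a point the paper's one-line justification leaves implicit, and it is handled correctly.
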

\begin{proof}
Noting that $\bm{P}$ is orthogonal, 
we have 
\begin{eqnarray*}
\bm{P} \bm{Y}_t 
&=& \sum_{s=1}^p \bm{P} \bm{Q} \bm{A}_s \bm{P}^\top \bm{P} \bm{Y}_{t-s} + \bm{P} \bm{Q} \bm{P}^\top \bm{P} \bm{v}_t \\
\tilde{\bm{Y}}_t 
&=& \sum_{s=1}^p \bm{P} \bm{Q} \bm{A}_s \bm{P}^\top \tilde{\bm{Y}}_{t-s} + \bm{P} \bm{Q} \bm{P}^\top \tilde{\bm{v}}_t,
\end{eqnarray*}
where 
$\tilde{\bm{Y}}_t = \bm{P} \bm{Y}_t$ and 
$\tilde{\bm{v}_t} = \bm{P} \bm{v}_t$.
Because $\tilde{\bm{Y}}_t$ is just a rearrangement 
of the components of $\bm{Y}_t$, 
the assertion follows from 
Proposition \ref{prop:ancestor SVAR}.
\end{proof}
Therefore, the regression coefficients of 
the SVAR model correspond to the 
causal effects in the sense of linear structural 
equation models.

Meanwhile, the causal effects for general 
time series models are described by the Granger causality 
and impulse response functions.
In the sequel, we consider the relationships
between such concepts and the causal effects in 
the linear structural equation models.
For the model \eqref{eq:SVAR-reduced componentwise}, 
the time series $\{Y_{j, t}\}_{t \in \mathbb{Z}}$
is called a Granger cause of $\{Y_{i, t}\}_{t \in \mathbb{Z}}$
if at least one coefficient 
$a_{ij}^{(s)}, s=1,\ldots,p$ is nonzero.
By using Proposition \ref{prop:ancestor SVAR},
we can interpret this Granger causality 
as the causality for the linear structural 
equation model \eqref{eq:SEM}.
\begin{prop}\label{prop:granger}
Consider the model \eqref{eq:SVAR-reduced componentwise}.
If $\{Y_{j, t}\}_{t \in \mathbb{Z}}$
is a Granger cause of $\{Y_{i, t}\}_{t \in \mathbb{Z}}$,
there exists a directed edge from $Y_{j, t-s}$ for some $s = 1,2,\ldots,p$
to $Y_{i, t}$ or 
a directed path from $Y_{j, t-s}$ to $Y_{i, t}$
depending only on the random variables at time $t$.
\end{prop}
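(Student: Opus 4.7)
The plan is to translate the algebraic Granger-causality condition into a statement about directed paths by factoring the reduced-form coefficient $\bm{Q}\bm{A}_s$ and reading each factor off the causal diagram. By hypothesis, $(\bm{Q}\bm{A}_s)_{ij}\neq 0$ for some $s\in\{1,\ldots,p\}$, and expanding this entry gives $\sum_{l=1}^{k} Q_{il}\,a_{lj}^{(s)}\neq 0$, so at least one summand is nonzero for some index $l$. I would then split according to whether $l=i$ or $l\neq i$.

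If $l=i$, then $Q_{ii}=1$ (because $\bm{A}_0$ is strictly lower triangular, so $\bm{I}_k-\bm{A}_0$ and therefore its inverse $\bm{Q}$ is unit lower triangular), hence $a_{ij}^{(s)}\neq 0$; by the structural equation \eqref{eq:SVAR-reduced componentwise}, $Y_{j,t-s}$ is a direct parent of $Y_{i,t}$, yielding the directed edge required by the first alternative. If $l\neq i$, the nonzero $a_{lj}^{(s)}$ gives a directed edge $Y_{j,t-s}\to Y_{l,t}$, and it remains to show that $Q_{il}\neq 0$ forces a directed path from $Y_{l,t}$ to $Y_{i,t}$ consisting only of contemporaneous edges; concatenating with the initial edge then produces a path whose vertices after the first hop all lie at time $t$, matching the second alternative.

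For this key step, I would use that $\bm{A}_0$ is nilpotent (being strictly lower triangular, $\bm{A}_0^k=\bm{O}$), so the Neumann series $\bm{Q}=\sum_{m=0}^{k-1}\bm{A}_0^m$ terminates. By the standard combinatorial interpretation of matrix powers, the $(i,l)$-entry of $\bm{A}_0^m$ equals the sum, over length-$m$ directed paths from $Y_{l,t}$ to $Y_{i,t}$ in the contemporaneous sub-DAG, of the products of the edge weights along those paths. If no such path existed for any $m\geq 1$, every summand would vanish and so would $Q_{il}$; contrapositively, $Q_{il}\neq 0$ guarantees that at least one such contemporaneous directed path exists, which is exactly what is needed.

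The main subtlety is precisely this one-sided graph-theoretic reading of $Q_{il}\neq 0$: nonvanishing of an entry of $\bm{Q}$ implies path existence but not conversely, since distinct paths could in principle cancel in the sum. Fortunately, the proposition only requires the easy direction, and once this translation between matrix entries and paths is in hand, the rest of the argument is a straightforward concatenation of edges in the causal diagram.
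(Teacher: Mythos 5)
Your proof is correct and in fact supplies the details that the paper omits: the authors simply remark that the result is ``a direct application of Proposition \ref{prop:ancestor SVAR}'', whose content is that $(\bm{Q}\bm{A}_s)_{ij}$ equals a controlled total effect, i.e.\ a sum of products of path coefficients over directed paths from $Y_{j,t-s}$ to $Y_{i,t}$ that avoid the other lagged variables; nonvanishing of that sum then forces at least one such path, which is either the direct edge or a path whose intermediate vertices all sit at time $t$. You reach the same conclusion by a more elementary, self-contained route: expanding $(\bm{Q}\bm{A}_s)_{ij}=\sum_l Q_{il}a^{(s)}_{lj}$, using unitriangularity of $\bm{Q}$ for the $l=i$ case, and reading $Q_{il}$ for $l\neq i$ off the terminating Neumann series $\bm{Q}=\sum_{m=0}^{k-1}\bm{A}_0^m$ together with the walk-counting interpretation of $(\bm{A}_0^m)_{il}$. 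This buys you independence from the ancestor-expansion machinery of the Appendix, and you correctly isolate the one subtlety (only the direction ``nonzero entry $\Rightarrow$ path exists'' is valid, since distinct paths could cancel). One caveat worth noting: the paper's displayed definition of Granger causality literally refers to the structural coefficients $a^{(s)}_{ij}$ of $\bm{A}_s$, under which the proposition would be trivially the first disjunct; you (reasonably, and consistently with the paper's intended proof via Proposition \ref{prop:ancestor SVAR}) read it as nonvanishing of the reduced-form coefficients $(\bm{Q}\bm{A}_s)_{ij}$, and you should state that reading explicitly.
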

We omit the proof because it is a 
direct application of Proposition \ref{prop:ancestor SVAR}. 
It follows from Corollary \ref{cor:ancestor SVAR} that
we can obtain the same interpretation
of the Granger causality 
regardless of the order of the components of the 
$\bm{Y}_t$.

Proposition \ref{prop:ancestor SVAR} provides the 
interpretation of the impulse responses as follows.
\begin{prop}\label{prop:impulse response}
Suppose that Assumption \ref{assump:stationary}
holds.
\begin{itemize}
\item[(i)]
The non-orthogonalized impulse response $\mathrm{IRF}_{ij}(s)$ coincides with 
the $\bm{S}_{-j}^s$-controlled total effect
from $Y_{j, t-s}$ to $Y_{i, t}$, where
\[
\bm{S}_{-j}^s = \{Y_{l, t-s} : l \neq j\}.
\]
In particular, if $\bm{A}_0 = \bm{O}$, 
any $\mathrm{IRF}_{ij}(s)$ coincides with 
the total effect from $Y_{j, t-s}$ to $Y_{i, t}$.
\item[(ii)]
Suppose that there is no 
contemporaneous
confounding 
and the innovation term $\bm{v}_t$ satisfies  
$\Var[\bm{v}_t] = \sigma_v^2 \bm{I}_k$ for 
some constant $\sigma_v^2 >0$, 
then, 
the orthogonalized impulse response 
$\mathrm{OIRF}_{ij}(s)$ coincides with 
the total effect from 
$Y_{j, t-s}$ to $Y_{i, t}$.
\end{itemize}
\end{prop}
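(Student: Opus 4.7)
The plan is to reduce part (i) to the ancestor expansion in Lemma \ref{lem:ancestor expansion} applied with the set $\bm{S} = \{Y_{l, t-s} : l = 1,\ldots,k\}$ of all contemporaneous variables at time $t-s$. Starting from the MA representation $\bm{Y}_t = \sum_{r=0}^\infty \bm{\Psi}_r \bm{e}_{t-r}$ and the identity $\bm{e}_{t-s} = \bm{Y}_{t-s} - \sum_{l=1}^p \bm{B}_l \bm{Y}_{t-s-l}$, I would substitute to isolate $\bm{Y}_{t-s}$, obtaining $Y_{i, t} = \sum_l (\bm{\Psi}_s)_{il}\, Y_{l, t-s} + (\text{terms depending on } \bm{Y}_{t'} \text{ for } t' < t-s \text{ and on shocks } \bm{v}_{t'} \text{ for } t' > t-s)$. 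Expanding the remaining past-$\bm{Y}$ terms recursively produces only variables strictly older than $t-s$, so no further contribution to the coefficient of any $Y_{l, t-s}$ arises; hence the coefficient of $Y_{l, t-s}$ in the full ancestor expansion is exactly $(\bm{\Psi}_s)_{il}$. Lemma \ref{lem:ancestor expansion} identifies that coefficient with $\tau_{i, l \,|\, do(\bm{S}_{-l}^s)}^{(s)}$, the $\bm{S}_{-l}^s$-controlled total effect. The special case $\bm{A}_0 = \bm{O}$ then follows because $\bm{Q} = \bm{I}_k$ removes every contemporaneous edge at time $t-s$, so no directed path from $Y_{j, t-s}$ to $Y_{i, t}$ can traverse another $Y_{l, t-s}$; the control therefore blocks nothing, and controlled total effect equals total effect.

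For part (ii), I would first identify $\tilde{\bm{L}}$ explicitly. Under $\Var[\bm{v}_t] = \sigma_v^2 \bm{I}_k$ we have $\Var[\bm{e}_t] = \sigma_v^2 \bm{Q}\bm{Q}^\top$. Since $\bm{A}_0$ is strictly lower triangular, $\bm{Q} = (\bm{I}_k - \bm{A}_0)^{-1}$ is lower unitriangular, and $\sigma_v^2 \bm{Q}^\top$ is upper triangular; uniqueness of LU decomposition (with a lower-unitriangular factor) forces $\tilde{\bm{L}} = \bm{Q}$. Hence $\mathrm{OIRF}_{ij}(s) = (\bm{\Psi}_s \bm{Q})_{ij} = \sum_l (\bm{\Psi}_s)_{il} (\bm{Q})_{lj}$, and the task reduces to matching this with the total effect from $Y_{j, t-s}$ to $Y_{i, t}$.

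I would argue this by a path decomposition: every directed path from $Y_{j, t-s}$ to $Y_{i, t}$ in the causal diagram has a unique \emph{last} visit to some vertex at time $t-s$, say $Y_{l, t-s}$. The prefix from $Y_{j, t-s}$ to $Y_{l, t-s}$ uses only $\bm{A}_0$ edges, and summing over such contemporaneous paths yields $\sum_m (\bm{A}_0^m)_{lj} = (\bm{Q})_{lj}$; this is consistent with Proposition \ref{prop:ancestor SVAR}(ii). The suffix from $Y_{l, t-s}$ to $Y_{i, t}$ never re-enters time $t-s$, so it is exactly what the $\bm{S}_{-l}^s$-controlled total effect counts, i.e.\ $(\bm{\Psi}_s)_{il}$ by part (i). Multiplying and summing over $l$ gives the total effect as $(\bm{\Psi}_s \bm{Q})_{ij}$. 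The hypothesis of no contemporaneous confounding is what ensures $\tilde{\bm{G}}$ is a bona fide DAG, so that Lemma \ref{lem:ancestor expansion} and this path calculus apply cleanly.

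The main obstacle I anticipate is the bookkeeping in the path decomposition in (ii) — carefully justifying that splitting at the \emph{last} time-$(t-s)$ vertex gives a true partition of the path set and matches the factorization $(\bm{Q})_{lj} \cdot (\bm{\Psi}_s)_{il}$ without double counting. A cleaner alternative would be to apply Lemma \ref{lem:ancestor expansion} directly with $\bm{S} = \{Y_{j, t-s}\}$ (a singleton), read off the coefficient of $Y_{j, t-s}$ as the total effect, and then verify the algebraic identity $\tau_{i,j}^{(s)} = (\bm{\Psi}_s \bm{Q})_{ij}$ by the same MA-substitution argument used in part (i); this bypasses the combinatorial step entirely.
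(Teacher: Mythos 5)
Your proof of part (i) is essentially the paper's argument in different packaging: both identify the coefficient of $\bm{Y}_{t-s}$ in the iterated companion-form/MA expansion as $\bm{\Psi}_s$, recognize that expansion as an ancestor expansion in the sense of Lemma \ref{lem:ancestor expansion}, and then observe that the remaining older lagged terms can never feed back into time $t-s$, so the block-controlled and single-slice-controlled total effects agree; your handling of the $\bm{A}_0=\bm{O}$ special case is the natural graph-theoretic one. Part (ii) is where you genuinely diverge. The paper matches the coefficient of the shock $v_{j,t-s}$ in the two expansions \eqref{eqs:ancestor impulse proof} and \eqref{eqs:expansion impulse}: the ancestor expansion gives the total effect $\tau^{(s)}_{ij}$ (since $v_{j,t-s}$ enters only $Y_{j,t-s}$, with unit coefficient), the companion form gives $(\bm{\Psi}_s\bm{Q})_{ij}$, and $\tilde{\bm{L}}=\bm{Q}$ closes the argument. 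You instead prove the identity $\tau^{(s)}_{ij}=(\bm{\Psi}_s\bm{Q})_{ij}$ by a direct path factorization: split each directed path at its last time-$(t-s)$ vertex, sum the contemporaneous prefixes to $\sum_{m\ge 0}(\bm{A}_0^m)_{lj}=(\bm{Q})_{lj}$ (a finite sum since $\bm{A}_0$ is strictly lower triangular, consistent with Proposition \ref{prop:ancestor SVAR}(ii)), and identify the suffixes with exactly the paths counted by the controlled total effect of part (i). The bookkeeping you worry about is in fact unproblematic: every edge is weakly forward in time, so a path's visits to time $t-s$ form an initial segment, the split point is well defined, and the factorization is a bijection without double counting; the paper's coefficient-matching avoids this combinatorics but leans on the more informally justified claim that the two sequential substitutions coincide. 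Both routes use the no-confounding and $\Var[\bm{v}_t]=\sigma_v^2\bm{I}_k$ hypotheses only to get $\Var[\bm{e}_t]=\sigma_v^2\bm{Q}\bm{Q}^\top$ and hence $\tilde{\bm{L}}=\bm{Q}$ by uniqueness of the LU factorization, which you identify exactly as the paper does.
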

\begin{proof}
For any $h >0$, let 
\begin{equation}\label{eq:proof V1}
\bm{V} = \{
Y_{j, t-s} : j=1,2,\ldots,k, s=0,1,\ldots,h+p
\},
\end{equation}
\begin{equation}\label{eq:proof Sh}
\bm{S}^h 
= \{
Y_{j, t-s} : j = 1,2,\ldots,k, s=h+1,\ldots,h+p
\},
\end{equation}
and 
\begin{equation}\label{eq:proof Sh minus}
\bm{S}_{-(j, s)}^h
= \bm{S}^h \setminus \{Y_{j, t-s}\}.
\end{equation}
It follows from the ancestor expansion 
of $Y_{i, t}$ for $\bm{S}^h$ that 
\begin{eqnarray}\label{eqs:ancestor impulse proof}
Y_{i, t}
&=& \sum_{Y_{j, t-s} \in \bm{S}^h} \tau_{ij|do(\bm{S}_{-(j, s)}^h)}^{(s)} Y_{j, t-s}
+ \sum_{Y_{j, t-s} \in \bm{AN}(Y_{i, t})\setminus \bm{S}^h} \tau_{ij | do(\bm{S}^h)}^{(s)} v_{j, t-s}
+ v_{i, t} \nonumber \\
&=& \sum_{s=h+1}^{h+p} \sum_{j=1}^k \tau_{ij|do(\bm{S}_{-(j, s)}^h)}^{(s)} Y_{j, t-s}
+ \sum_{s=0}^h \sum_{j=1}^k \tau^{(s)}_{ij | do(\bm{S}^h)}v_{j, t-s} \nonumber \\
&=& \sum_{s=0}^h \sum_{j=1}^k \tau_{ij}^{(s)} v_{j, t-s}
+  \sum_{s=h+1}^{h+p} \sum_{j=1}^k \tau_{ij|do(\bm{S}_{-(j, s)}^h)}^{(s)} Y_{j, t-s},
\end{eqnarray}
where 
$\tau_{ij | do(\bm{S}^h)}^{(0)} = \tau_{ij}^{(0)} = 1$
if $i=j$.
On the other hand, we have the following representation 
\begin{eqnarray}\label{eqs:expansion impulse}
\bm{Y}_t 
&=& \sum_{s=0}^h [\bm{\Lambda}^s]_k^k \bm{e}_{t-s}
+ [\bm{\Lambda}^{h+1}]_{kp}^k \bm{\xi}_{t-h-1} \nonumber \\
&=& \sum_{s=0}^h [\bm{\Lambda}^s]_k^k \bm{Q}\bm{v}_{t-s}
+ [\bm{\Lambda}^{h+1}]_{kp}^k \bm{\xi}_{t-h-1},
\end{eqnarray}
where 
for $s \geq 0$, 
$[\bm{\Lambda}^s]_k^k$ is the 
$k \times k$ sub-matrix of $\bm{\Lambda}^s$
corresponding to the top $k$ columns and rows, 
$[\bm{\Lambda}^s]_{kp}^k$ is the sub-matrix of 
$\bm{\Lambda}^s$ comprising the top $k$ rows of 
$\bm{\Lambda}^s$, 
$\bm{\xi}_t = (\bm{Y}_t^\top,\ldots,\bm{Y}_{t-p+1}^\top)^\top$,  
\[
\bm{\Lambda}
= \left(\begin{array}{ccccc}
\bm{B}_1 & \bm{B}_2 & \cdots & \bm{B}_{p-1} & \bm{B}_p \\
\bm{I}_k & \bm{O} & \cdots & \bm{O} & \bm{O} \\
\bm{O} & \bm{I}_k & \cdots & \bm{O} & \bm{O} \\
\vdots & \vdots & \ddots & \vdots & \vdots \\
\bm{O} & \bm{O} & \cdots & \bm{I}_k & \bm{O} \\
\end{array}
\right),
\]
and $\bm{B}_s = \bm{Q}\bm{A}_s$ for $s=1,\ldots,p$.
Because the two expressions 
\eqref{eqs:ancestor impulse proof} and  \eqref{eqs:expansion impulse} 
are derived from the same sequential substitutions, they are equivalent.
Therefore, the assertion (ii) is obvious because 
$\tilde{\bm{L}}=\bm{Q}$ holds from $\Var[\bm{e}_t]=\bm{Q}\sigma^2_v\bm{I}_k\bm{Q}^\top$.
Meanwhile, the coefficients of $\bm{Y}_{t-h-1}$ correspond to the left block of 
$[\bm{\Lambda}^{h+1}]^k_{kp}$, which coincides with $[\bm{\Lambda}^{h+1}]^k_k$. 
Moreover, because $Y_{j,t-h-1}$ does not have directed paths to $Y_{j,t-s}, s > h+1$, 
the $\bm{S}^{h}_{-(j,s)}$-controlled total effect from $Y_{t-h-1}$ to $Y_{i,t}$ coincides with $\bm{S}^{h+1}_{-j}$-controlled total effect, where
$\bm{S}^{h+1}_{-j}=\{Y_{l, t-h-1} : l \neq j\}$. 
Thus, the assertion (i) is proved.
\end{proof}
\begin{rem}\label{rem:discussion time series causality}
The impulse responses are often interpreted as a 
shock caused by the innovation to the future realizations 
of endogenous variables (\cite{Hamilton1994} and \cite{Sims1980}).
Meanwhile, 
Proposition \ref{prop:impulse response} 
provides the different perspectives as the 
(controlled) total effects from the endogenous variable $Y_{j,t}$ to the future realization $Y_{i,t+s}$ for some $s$. 
Particularly, it follows from Corollary \ref{cor:ancestor SVAR}, Propositions \ref{prop:granger}, and \ref{prop:impulse response} that we can regard Granger causality and non-orthogonalized impulse responses as indicators of causality even if contemporaneous confounding exists or the order of the components of the $Y_t$ is unknown.
\end{rem}
\section*{Acknowledgements}
The authors would like to
thank Enago (www.enago.jp) for the English language
review.
This work was  
supported by JSPS KAKENHI Grant Number 21K13271(K.F.).
\bibliographystyle{econ}
\bibliography{ref2.0}
\end{document}